\newtheorem{Example}{Example}
\newtheorem{Theorem}{Theorem}
\newtheorem{Lemma}{Lemma}
\newtheorem{Corollary}{Corollary}
\newtheorem{Proposition}{Proposition}
\title{Computing \textit{\textbf{Ex Ante}} Coordinated Team-Maxmin Equilibria in Zero-Sum Multiplayer Extensive-Form Games}
\author{
    Youzhi Zhang, Bo An, Jakub {\v{C}}ern{\'y}
}
\begin{document}

\maketitle
\begin{abstract}
Computational game theory has many applications in the modern world in both adversarial situations and the optimization of social good. While there exist many algorithms for computing solutions in two-player interactions, finding optimal strategies in multiplayer interactions efficiently remains an open challenge. This paper focuses on computing the multiplayer Team-Maxmin Equilibrium with Coordination device (TMECor) in zero-sum extensive-form games. TMECor models  scenarios when a team of players coordinates {\it ex ante} against an adversary. Such situations can be found in card games (e.g., in Bridge and Poker), when a team works together to beat a target player but communication is prohibited; and also in real world, e.g., in forest-protection operations, when coordinated groups have limited contact during interdicting illegal loggers. The existing algorithms struggle to find a TMECor efficiently because of their high computational costs. To compute a TMECor in larger games, we make the following key contributions: (1) we propose a hybrid-form strategy representation for the team, which preserves the set of equilibria; (2) we introduce a column-generation algorithm with a guaranteed finite-time convergence in the infinite strategy space  based on a novel best-response oracle; (3) we develop an associated-representation technique for the exact representation of the multilinear terms in the best-response oracle; and (4) we experimentally show that our algorithm is several orders of magnitude faster than prior state-of-the-art algorithms in large games.
\end{abstract}

\section{Introduction}
\label{TMECorIntro}
One of the most important problems in artificial intelligence  is to design algorithms for agents who make complex decisions in interactive environments  \cite{russell2016artificial}.
So far, researchers made significant progress mostly in non-cooperative two-player games, focusing on finding a Nash Equilibrium (NE) \cite{nash1951non,von1996efficient,zinkevich2008regret} or a Stackelberg equilibrium \cite{conitzer2006computing}. These results paved the way for many applications, such as in security games that have high social impact \cite{sinha2018stackelberg} or poker algorithms that defeated top human professionals \cite{moravcik2017,brown2018superhuman}. However, the research in multiplayer games remains limited. Theoretical results were achieved only for games with specific structures (e.g., polymatrix games \cite{cai2011minmax}), or there are no theoretical guarantees at all (e.g., the algorithm in \citeauthor{brown2019superhuman} \citeyearpar{brown2019superhuman}). Finding and playing NEs in multiplayer games is difficult due to the following two reasons. First, computing NEs is PPAD-complete even for three-player zero-sum games \cite{chen20053}. And second, NEs are neither unique nor exchangeable in multiplayer games, 
which makes it almost impossible for the players who choose their strategies independently to form an NE together. The research on multiplayer games hence focuses on alternative solution concepts with more favorable properties.

Team-Maxmin Equilibrium with Coordination device  (TMECor, TMEsCor as a plural) \cite{celli2018computational,farina2018ex} is a solution concept that models a situation  when a  team  of players shares the same utility function and coordinates {\it ex ante} against an adversary. 
That is, the team members are allowed to discuss and agree on tactics before the game starts, but they cannot communicate during the game.
Celli and Gatti \citeyearpar{celli2018computational} show that {\it ex ante} coordination can be modelled using a coordination device, assuming that the adversary does not observe any signal from the device. The team members agree on a planned strategy (e.g., a mixed strategy) in the planning phase, and then, just before the game starts, the coordination device randomly picks a pure joint  strategy (from the planned strategy) for the team members to act upon.  
A TMECor is an NE between the team (i.e., each team member has no incentive to deviate)  and the adversary in a zero-sum multiplayer extensive-form game, and it has the properties of NEs in zero-sum two-player games (e.g., exchangeability). The study of TMECor is motivated by its ability to capture many real-world scenarios. 
For example, in multiplayer poker games, a team may play against an adversary player, but they cannot communicate and discuss their strategy during the game due to the rules. In Bridge, when the game reaches the phase of the play of the hand, two defenders who form a team play against the declarer. Or in security games in which several different groups (e.g., NGOs, the Madagascar National Parks, local police, and community volunteers) aim to protect forests from illegal logging \cite{mccarthy2016preventing},  TMECor models the groups' inability to communicate while they try to interdict the escaping loggers.

TMECor has better properties than NE in multiplayer games; however, computing it is still difficult---it was shown to be FNP-hard \cite{celli2018computational}. The problem can be formulated as a linear program \cite{celli2018computational}, where the team plays joint normal-form strategies for all team members, but each member's normal-form strategy space is exponential in the size of the game tree. A Column Generation (CG) algorithm was hence proposed to compute a solution more efficiently \cite{celli2018computational}. The most important component of the algorithm is a Best Response  Oracle (BRO) that computes an optimal strategy of the team against the adversary's strategy, but it is in itself an NP-hard problem \cite{celli2018computational}. A BRO can be formulated as a Mixed-Integer Linear Program (MILP) that involves a large number of integer variables \cite{celli2018computational,farina2018ex}. As a consequence, the existing approaches fail to scale up to larger scenarios (see Section \ref{sectionTMECorRelatedwork} for  details).

{\bf Main Contributions.} The most significant outcome of our work is a new algorithm for computing a TMECor, which runs several  orders of magnitude faster than the state-of-the-art algorithms and scales to much larger games. To design this algorithm, we make several key contributions. The first contribution is a new hybrid-form strategy representation for the team's strategy in a TMECor. Based on this representation, we develop a CG method that guarantees convergence to a TMECor in a finite number of iterations, despite the fact that the space of our hybrid-form strategies is infinite. The core component of the CG method is a novel BRO. Our BRO is formulated as a multilinear program in which the multilinear terms represent reaching probabilities for terminal nodes to reduce the number of involved integer variables. We show that the BRO can be transformed into an MILP exactly using another contribution: a novel global optimization technique called Associated Representation Technique (ART). Another essential property of ART is that it efficiently generates associated constraints for the equivalence relations between multilinear terms, which significantly speeds up the computation of the BRO's MILP formulation by reducing its space of feasible solutions. All together, our approach shows that formulating the problem as a multilinear program with global optimization techniques can be significantly faster than the direct formulation as a linear program.

\section{Preliminaries}\label{TMECorPre}
An imperfect-information extensive-form game (EFG) is defined by a tuple $(N,A,H,L,\chi,\rho,u,I)$ \cite{shoham2008multiagent}. $N=\{1,\dots,n\}$ denotes a finite set of  players and $A$ is a finite set of actions. $H$ is a finite set of nonterminal decision nodes (sequences of actions (histories)) in the game, with $L$ being a set of leaf (terminal)  nodes. To each nonterminal decision node the function $\chi: H\rightarrow 2^A$ assigns a subset of possible actions to play, while function $\rho: H\rightarrow N\cup\{c\}$ identifies an acting player ($c$ denotes chance). Moreover, we denote $H_i=\{h\mid \rho(h)=i,h\in H\}, \forall i\in N$. 
To determinate the outcomes, we use $u=(u_1,\dots,u_n)$, where $u_i:L\rightarrow \mathbb{R}$ is player $i$'s utility function assigning a finite utility to each terminal node. The imperfect observations are modelled through the set of information sets $I=(I_1,\dots,I_n)$. $I_i$ is the set of player $i$'s information sets (a partition of $H_i$), such that $\rho(h_1)=\rho(h_2)$ and $\chi(h_1)=\chi(h_2)$ for any $I_{i,j}\in I_i$ with $h_1, h_2\in I_{i,j}$. We assume that actions are unique to information sets, i.e., there exists only one information set $I_{i,j}$ such that $a\in \chi(I_{i,j})$ for any $a\in A$.

A sequence $\sigma_i \in \Sigma_i$ is an ordered list of actions taken by a single player $i$ leading to some node $h$. $\varnothing$~stands for the empty sequence (i.e., a sequence with no actions).  We use $\text{seq}_i(I_{i,j})$ to denote the player $i$'s sequence leading to $I_{i,j}\in  I_i$, and $\text{seq}_i(h)$ for the player $i$'s sequence leading to $h\in  H\cup L$. We assume perfect recall, i.e., for each player $i$ and nodes $h_1,h_2\in I_{i,j} \in I_i$,  $\text{seq}_i(h_1)=\text{seq}_i(h_2)$. A realization plan ({\bf sequence-form strategy}, also representing a behavioral strategy) of player $i$ is a function $r_i:\Sigma\rightarrow [0,1]$ satisfying the network-flow constraints:
\begin{subequations}
\begin{align}
  r_i(\varnothing)&=1\label{sqconstraint1ad}\\
   \sum_{a\in \chi(I_{i,j})}\!  r_i(\sigma_i a)&=r_i(\sigma_i) \quad \forall I_{i,j}\!\in\! I_i, \sigma_i\!=\!\text{seq}_i(I_{i,j})  \label{sqconstraintsecondad}\\
  r_i(\sigma_i)&\geq 0 \quad \forall  \sigma_i\in \Sigma_i.\label{sqconstraint2ad}
\end{align}
\end{subequations}
Let $\mathcal{R}_i$ be the set of all (mixed) sequence-form strategies.
We call $r_{i}$   a pure sequence-form strategy if $r_{i}(\sigma_i)\in\{0,1\}$  for all $\sigma_{i}\in \Sigma_{i}$. The set of pure sequence-form strategies is denoted as $\overline{\mathcal{R}}_{i}$ and naturally, we have $\overline{\mathcal{R}}_{i}\subseteq \mathcal{R}_i$. 

A pure {\bf normal-form strategy} of player $i$ is a tuple $\pi_i\in \Pi_i =\times_{I_{i,j}\in I_i}\chi(I_{i,j})$ specifying one action to play in each information set of player $i$. In EFGs, the size of $\Pi_i$ is exponential in the size of the game tree. A reduced normal-form strategy specifies actions only in reachable information sets due to earlier actions. Henceforth, we focus on reduced normal-form strategies (despite their size being still exponential in the size of the game tree) and refer to them as normal-form strategies. A mixed normal-form strategy $x_i$ is a probability distribution over $\Pi_i$, denoted as $x_i\in\Delta(\Pi_i)$. For any pure (mixed) normal-form strategy there exists an equivalent pure (mixed) sequence-form strategy \cite{von1996efficient}, as shown in Example   \ref{TMECorExampleNotation}. 
We call two strategies of player $i$ realization-equivalent (using notation $\sim$) if they induce the same probabilities for reaching nodes for all strategies of other players. Indeed, two strategies are realization-equivalent if and only if they correspond to the same realization plan \cite{von1996efficient}.  

\begin{figure*}
    \centering
    \includegraphics[height=4cm]{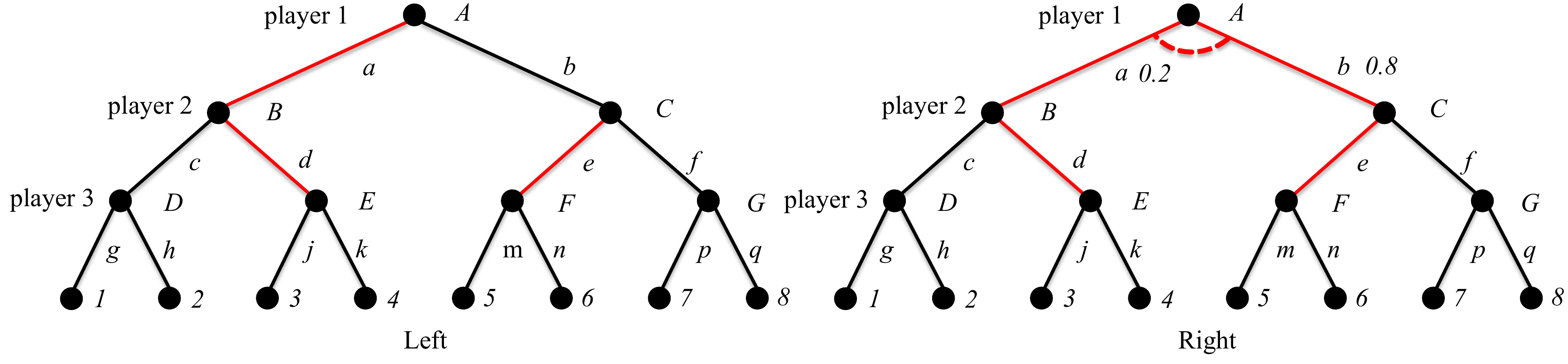}
    \caption{An example of an EFG with 3 players. Player 1 acts in information set $A$, player 2 acts in information sets $B$ and $C$, and player 3, assuming the role of the adversary, acts in information sets $D$--$G$. Actions in the information sets are denoted $a$--$q$, and they may also represent sequences in this case. Nodes $1$--$8$ are terminal nodes. Each side shows a different team's strategy.
 }
    \label{TMECorNotationExample}
\end{figure*}

{\bf Team-Maxmin Equilibrium with Coordination Device} \cite{celli2018computational,farina2018ex} is a solution concept that models a scenario when a single team $T=\{1,\dots,n-1\}$ with {\it ex ante} coordination plays against an adversary $n$. We assume the team shares the same utility $u_i(l)=u_j(l), \forall i,j\in T, l\in L$ and the utility of the adversary is $u_n(l) = -u_T(l)=-\sum_{i\in T}u_i(l),\forall l\in L$, i.e., it is a zero-sum EFG. The {\it ex ante} coordination means that the team players can communicate only before the game starts, through a coordination device. A pure strategy of the team in TMECor is represented by a joint normal-form strategy $\pi_T\in\Pi_T=\times_{i\in T}\Pi_i$. A mixed strategy $x_T$ is then a probability distribution over $\Pi_T$, i.e., $x_T\in\Delta(\Pi_T)$.  We use $L_{\pi_T,\sigma_n}\subseteq L$ to identify a set of terminal nodes reachable by strategy profile $(\pi_T,\sigma_n)$, as shown in Example \ref{TMECorExampleNotation}. Then, the extended utility function of the team's pure strategy $\pi_T$ specifies the utility of profile $(\pi_T,\sigma_n)$ due to chance nodes as $U_T(\pi_T,\sigma_n) = \sum_{l\in L_{\pi_T,\sigma_n}}u_T(l)c(l)$, where $c(l)$ denotes the chance probability of reaching $l$. For the team's mixed strategy $x_T$, we formulate the extended expected utility function as:
  \begin{align}\label{U_T_X_sigma}\textstyle U_T(x_T,\sigma_n)= \sum_{\pi_T\in \Pi_T}U_T(\pi_T,\sigma_n)x_T(\pi_T).\end{align} 
 Using a realization plan of the adversary, we write $L_{\pi_T,r_n}$ to denote a set of terminal nodes reachable by strategy profile $(\pi_T,r_n)$, as shown in Example   \ref{TMECorExampleNotation}, and  $U_T(\pi_T,r_n) = \sum_{l\in L_{\pi_T,r_n}}r_n(\text{seq}_n(l))u_T(l)c(l)$. For a mixed strategy,  
\begin{align}\label{U_T_X_r}\textstyle U_T(x_T,r_n)= \sum_{\pi_T\in \Pi_T}U_T(\pi_T,r_n)x_T(\pi_T).\end{align}
\begin{Example}    \label{TMECorExampleNotation}
On the left side of Figure \ref{TMECorNotationExample} we show an example of a strategy of a team consisting of players 1 and 2. $\pi_1=a$ is a normal-form strategy of player 1, in which player 1 takes action $a$ in his unique information set. $\pi_2=de$ is then a normal-form strategy of player 2, such that player 2 takes actions $d$ and $e$ in his two information sets. $\pi_1$ is equivalent to a pure sequence-form strategy $r_1$ with $r_1(a)=1$, and $\pi_2$ is equivalent to a pure sequence-form strategy $r_2$ with $r_2(d)=r_2(e)=1$. Together, strategies $\pi_1$ and $\pi_2$ form a pure joint normal-form strategy of the team $\pi_T=(\pi_1,\pi_2)$. Given the adversary's sequence $\sigma_3=j$, the set of terminal nodes reachable by $(\pi_T,\sigma_3)$ is $L_{\pi_T,\sigma_3}=\{3\}$. In case that $\sigma_3=m$, we have $L_{\pi_T,\sigma_3}=\emptyset$. Given any   sequence-form strategy $r_3$ of the adversary, we have $L_{\pi_T,r_3}=\{3,4\}$.
\end{Example}

A TMECor $(x_T,r_n)$ is a Nash Equilibrium (NE) (i.e., $x_T\in \arg\max_{x_T}U_T(x_T,r_n)$ and $r_n\in \arg\max_{r_n}-U_T(x_T,r_n)$) in which the team is treated as a single player. Therefore, we have $x_T\in \arg\max_{x_T} \min_{r_n}$ $U_T(x_T,r_n)$ due  to the zero-sum assumption. Because the team members share the same utility, none of them has an incentive to deviate. A strategy profile is an $\epsilon$-TMECor if neither the team nor the adversary is to gain more than $\epsilon$ if one of them deviates.

 \section{ Related Work}\label{sectionTMECorRelatedwork}
The Team-Maxmin Equilibrium (TME)~\cite{von1997team,celli2018computational,zhang2020converging} is a solution concept closely related to TMECor, in which a  team of players with the same utility function plays against an adversary independently, without coordination. Contrary to TMECor, the team members in TME are assumed to use behavioral strategies. Previous literature identified two main issues associated with TME. First, computing a TME, i.e., finding the optimal joint behavioral strategies of team members, is a non-convex FNP-hard optimization problem \cite{celli2018computational}. Second, the equilibrium strategies of TME might be significantly suboptimal compared to TMECor because of the lack of coordination~\cite{celli2018computational,farina2018ex}. Another complication arises if we attempt to coordinate the players using behavioral strategies. Due to the lack of communication between team members during the game, the team experiences imperfect recall. Because of imperfect recall, the behavioral strategies are not realization-equivalent to normal-form strategies induced by the coordination device, and thus can not capture the correlation between the team members' normal-form strategies. This was also shown to potentially result in considerable losses of utility to the team \cite{farina2018ex}. Moreover, with imperfect recall, there is even no guarantee for the existence of an NE in behavioral strategies \cite{wichardt2008existence}. Using normal-formal strategies is hence crucial to TMECor.

Because the number of normal-form strategies is exponential in the size of the game, Celli and Gatti  \citeyearpar{celli2018computational} proposed a hybrid representation of players' strategies to speed up the computation of TMEsCor. In their representation, only the adversary uses sequence-form strategies. Their CG algorithm for TMEsCor formulates a BRO using an MILP with $|L|$ integer variables. However, the algorithm does not scale well as $|L|$ can be extremely large for games of moderate size (see Table \ref{kuhnLeducpokerTMECorResults}). In an attempt to overcome this issue, Farina  et al. \citeyearpar{farina2018ex} developed a realization-form strategy representation that focuses on the probability distribution on terminal nodes of the game tree. Their representation is then used to derive an auxiliary game that represents the original game in the form of a set of subtrees. In each subtree, the adversary faces one team member, while they both use realization-form strategies. Rather than to derive a TMECor-specific strategy representation, their approach is to create a hybrid-form tree structure for the team.  An essential drawback of this approach is that the auxiliary game is based on the realization-form strategy that is not executable from the game tree's root. It requires a cumbersome reconstruction algorithm to apply it \cite{celli2019learning}. On the other hand, the auxiliary game enables to formulate a faster BRO with a number of integer variables equal to $\sum_{i\in T\setminus\{1\}}|\Sigma_i|$. Despite being a clear improvement over the approach of Celli and Gatti \citeyearpar{celli2018computational}, the BRO of Farina  et al. \citeyearpar{farina2018ex} is still inefficient in larger games with a higher number of sequences. Moreover, this BRO is not compatible with associated constraints \cite{zhang2020computing} that were shown to significantly improve the scalability by reducing the feasible solution space of an MILP. In this paper, we develop a hybrid-form strategy representation inspired by the previous literature, but which  does not depend on the inexecutable realization-form strategies. Simultaneously, the representation gives rise to a novel BRO that considerably reduces the number of integer variables and is compatible with associated constraints. Moreover, we design a new global optimization technique called ART that does not depend on the recursive generation of associated constraints proposed by Zhang and An \citeyearpar{zhang2020computing}. ART also works for games with four or more players  and    significantly improves the scalability in experiments. Note that  the concurrent work \cite{farina2020faster}   solves games with only three players.   We give more details on the related work in Appendix~\ref{RelatedWrokDetail}.

 \begin{table*}
 \footnotesize
  \centering
   \begin{tabular}{c|l|c|l }
     \hline
     
$\mathcal{R}_i$&set of player $i$'s mixed sequence-form strategies including $r_i$&$\mathcal{F}_T$& set of pure hybrid-form strategies including $f_T$\\
$\overline{\mathcal{R}}_i$&set of   player $i$'s pure sequence-form strategies  &$\Delta(\Pi_T)$&set of mixed strategies  including $x_T$\\
        $\Pi_{T}$&set of the team's   joint pure normal-form strategies including $\pi_{T}$  & $\Delta(\mathcal{F}_T)$&set of mixed strategies  including $\overline{x}_T$ \\
         $T\setminus{1}$&set of the team members except player 1   &$\sim$&realization equivalence  \\

              $\sigma_T(i)$&the team player $i$'s sequence in the joint sequence $\sigma_T$  & $\Sigma_i$&set of player $i$'s   sequences including $\sigma_i$ \\

              seq$_i(*)$  &player $i$'s sequence reaching  a node/information set $*$ & $\Sigma_T$&set of the team's joint sequences including $\sigma_T$  \\
               $w(\sigma_T)$&   variable representing multilinear term $\prod_{i\in T}r_i(\sigma_T(i))$& $S_x$& support set of  mixed strategy $x$ with the size $|S_x|$ \\
      \hline
   \end{tabular}
       \caption{The notation used in Section \ref{TMECorApproach}, in addition to the standard EFG notation.
       }
\end{table*}
       
\section{Hybrid-Form Column Generation with Efficiently Solvable Multilinear Oracle}\label{TMECorApproach}

Now we describe our novel approach for computing TMECor efficiently. First, we introduce a new strategy representation for the team, and we show how to use this representation to compute an exact TMECor. Then we propose a column-generation algorithm based on our representation, which iteratively calls a multilinear BRO. Finally, we develop a novel global optimization technique to solve our BRO efficiently.

\subsection{Hybrid-Form Strategies for the Team}
 In our hybrid-form team strategy representation, one team member acts according to a (mixed) sequence-form strategy, while the other members use pure normal-form strategies.\footnote{Note that all team players are free to use pure normal-form strategies--in that case, our algorithm still works, but it would significantly slow down the computation when compared with the hybrid-form strategies, as shown in the experiments. In contrast, if two or more team players use sequence-form strategies, the reaching probabilities for terminal nodes are no longer exactly representable by linear constraints (see Eqs.(\ref{werconstraint1}) and  (\ref{werconstraint2})).} Any team player can play the sequence-form strategy, and without  loss of generality, we opt for player 1. A pure hybrid-form strategy of the team is defined by a tuple\footnote{To differentiate between $f_T$ and $\overline{x}_T$ (a probability distribution over the space of $f_T$  defined later), we call $f_T$ a ``pure'' hybrid-form strategy and $\overline{x}_T$ a ``mixed'' hybrid-form strategy.}:  
\begin{align*}f_T=(r_{1},\pi_{T\setminus\{1\}}),\end{align*} 
where  $r_{1}\in \mathcal{R}_{1}$, and $\pi_{T\setminus\{1\}}\in \Pi_{T\setminus\{1\}}=\times_{i\in T\setminus\{1\}}\Pi_i$, as shown in Example \ref{TMECorExampleNotation2}.  We denote $\mathcal{F}_T$ the set of pure hybrid-form strategies and refer to $\overline{x}_T\in\Delta(\mathcal{F}_T)$ as a mixed hybrid-form strategy---a probability distribution over $\mathcal{F}_T$. Note that the number of pure hybrid-form strategies ($|\mathcal{F}_T|$) is infinite because each strategy in $\mathcal{R}_1$ corresponds to at least one pure hybrid-form strategy, and the size of $\mathcal{R}_1$ is infinite. 

Given a strategy profile $(f_T,\sigma_n)$, we define an extended utility function of the team's pure strategy similarly as $U_T(\pi_T,\sigma_n)$ in Eq.(\ref{U_T_X_sigma}): $U_T(f_T,\sigma_n)=\sum_{l\in L_{f_T,\sigma_n}}r_{1}(\text{seq}_1(l))u_T(l)c(l)$, with $L_{f_T,\sigma_n}$ being the leafs reachable by the profile, as shown in Example \ref{TMECorExampleNotation2}. The expected utility of a mixed strategy is then: 
\begin{align*}\textstyle U_T(\overline{x}_T,\sigma_n)=\sum_{f_T\in \mathcal{F}_T}U_T(f_T,\sigma_n)\overline{x}_T(f_T).\end{align*}
 Using a sequence-form strategy $r_n$ of the adversary, an extended utility function of the team's pure strategy is defined (similarly to $U_T(\pi_T,r_n)$ as in Eq.(\ref{U_T_X_r})) as $U_T(f_T,r_n)=\sum_{l\in L_{f_T,r_n}}r_{n}(\text{seq}_n(l))r_{1}(\text{seq}_1(l))u_T(l)c(l)$, with $L_{f_T,r_n}$ defined accordingly, as shown in Example \ref{TMECorExampleNotation2}. The corresponding expected utility of a mixed strategy is: 
 $$\textstyle U_T(\overline{x}_T,r_n)=\sum_{f_T\in \mathcal{F}_T}U_T(f_T,r_n)\overline{x}_T(f_T).$$    
\begin{Example}\label{TMECorExampleNotation2}
On the right side of Figure \ref{TMECorNotationExample} we depict a sequence-form strategy of player 1 $r_1$ with $r_1(a)=0.2$ and $r_1(b)=0.8$, and a normal-form strategy of player 2 $\pi_2=de$. The corresponding hybrid-form strategy is then $f_T=(r_1,\pi_2)=((0.2,0.8),de)$. $r_1$ is equivalent to a mixed normal-form strategy $x_1$ with $x_1(\pi_1=a)=0.2$ and $x_1(\pi'_1=b)=0.8$, where $\pi_1=a$ and $\pi'_1=b$ are pure normal-form strategies. The strategy $f_T$ can be represented as a joint strategy $(x_1=(0.2,0.8),\pi_2=de)$, which is equivalent to a mixed joint normal-form strategy $x_T$ with $x_T(\pi_1,\pi_2)=0.2$ and $x_T(\pi'_1,\pi_2)=0.8$. 
In this case, the mixed joint normal-form strategy $x_T$ corresponds to a mixed hybrid-form strategy $\overline{x}_T$ with $\overline{x}_T(f_T)=1$. 
Given the adversary's sequence $\sigma_3=j$, the set of terminal nodes reachable by $(f_T,\sigma_3)$ is denoted as $L_{f,\sigma_3}=\{3\}$. If $\sigma_3=m$, the set $L_{f_T,\sigma_3}=\{5\}$. For an arbitrary   sequence-form strategy $r_3$ of the adversary, we have $L_{f_T,r_3}=\{3,4,5,6\}$.
\end{Example}
We define the support sets in a mixed normal-form strategy and a mixed hybrid-form strategy as:
\begin{align*}&S_{x_T}=\{\pi_T\mid x_T(\pi_T)>0, \pi_T\in \Pi_T\},\\ &S_{\overline{x}_T}=\{f_T\mid \overline{x}_T(f_T)>0, f_T\in \mathcal{F}_T\}.\end{align*}
 
The first step in computing a TMECor using hybrid-form strategies is to show that the set of TMEsCor is preserved under this strategy representation. In other words, the utility of any player (the team or the adversary) in each TMECor with a hybrid-form strategy has to be the same as in a TMECor with a normal-form strategy of the team. Because we consider zero-sum games, it is enough to show that both strategy representations guarantee the same utility of the team. To prove this result, we only need to consider each sequence of the adversary once because a sequence-form strategy of the adversary is defined by the probability for taking each sequence.
We first prove a more general result: for any mixed normal-form strategy of the team, there exists a mixed hybrid-form strategy such that the team obtains the same  expected utility in both strategy representations with any strategy of the adversary, and vice versa. 
Recall that two strategies of the team are realization-equivalent if they both define the same probabilities for reaching nodes given any strategy of the adversary. Based on this definition, in the following two lemmas, we show the realization equivalence between the hybrid-form and the normal-form strategies, as shown in Example \ref{TMECorExampleNotation2}. We also show that two realization-equivalent strategies could have the same size of support sets in Lemma \ref{realizationEQLemma}.  All proofs can be found in Appendix \ref{TMECorProofs}.
\begin{Lemma}\label{realizationEQLemma}
For every normal-form strategy $x_T\in\Delta(\Pi_T)$ there exists a realization-equivalent hybrid-form strategy $\overline{x}_T\in\Delta(\mathcal{F}_T)$ such that $|S_{x_T}| = |S_{\overline{x}_T}|$.
\end{Lemma}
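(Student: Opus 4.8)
The plan is to construct the hybrid-form strategy $\overline{x}_T$ directly from $x_T$ by replacing, in each joint pure strategy in the support, player 1's normal-form component with its equivalent pure sequence-form strategy, while leaving the other members' components untouched. Concretely, for every pure normal-form strategy $\pi_1\in\Pi_1$ let $r_1^{\pi_1}\in\overline{\mathcal{R}}_1$ denote the pure sequence-form strategy equivalent to $\pi_1$, which exists by the realization-equivalence of normal-form and sequence-form strategies. I would then define the map $\phi:\Pi_T\to\mathcal{F}_T$ by $\phi(\pi_1,\pi_{T\setminus\{1\}})=(r_1^{\pi_1},\pi_{T\setminus\{1\}})$ and set $\overline{x}_T(\phi(\pi_T))=x_T(\pi_T)$ for each $\pi_T\in S_{x_T}$, with $\overline{x}_T(f_T)=0$ otherwise.

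First I would check that $\overline{x}_T$ is well defined and a valid element of $\Delta(\mathcal{F}_T)$, and that the support sizes match. The crucial observation is that $\phi$ is injective: since we work with reduced normal-form strategies, the correspondence $\pi_1\mapsto r_1^{\pi_1}$ between reduced normal-form and pure sequence-form strategies of player 1 is a bijection, and $\phi$ acts as the identity on the $\pi_{T\setminus\{1\}}$ coordinate, so distinct joint pure strategies are sent to distinct hybrid-form strategies. Injectivity immediately yields that $\overline{x}_T$ inherits nonnegativity and $\sum_{f_T}\overline{x}_T(f_T)=\sum_{\pi_T}x_T(\pi_T)=1$ from $x_T$, and that $|S_{\overline{x}_T}|=|S_{x_T}|$, as required.

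It then remains to establish realization equivalence $\overline{x}_T\sim x_T$, which I would argue term-by-term and then lift to the mixtures by linearity of reaching probabilities in the mixing weights. For each pure $\pi_T=(\pi_1,\pi_{T\setminus\{1\}})$, its image $\phi(\pi_T)$ differs only in that player 1 plays $r_1^{\pi_1}$ in place of $\pi_1$; since $r_1^{\pi_1}\sim\pi_1$ these two pure profiles induce the same realization plan, so for every node $h$ and every adversary strategy the team's reaching quantity $\prod_{i\in T}r_i(\mathrm{seq}_i(h))$ is unchanged. Because $\overline{x}_T$ places on each $\phi(\pi_T)$ exactly the weight $x_T(\pi_T)$, the realization plan of $\overline{x}_T$ is the $x_T$-weighted average of the realization plans of the $\phi(\pi_T)$, which equals that of $x_T$; by the characterization of realization equivalence via a common realization plan this gives $\overline{x}_T\sim x_T$, and via Eqs.~(\ref{U_T_X_sigma})--(\ref{U_T_X_r}) equal expected utilities against every adversary strategy.

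The only genuinely delicate step is the injectivity of $\pi_1\mapsto r_1^{\pi_1}$, and this is precisely where the restriction to reduced normal-form strategies is indispensable: with full normal-form strategies several distinct $\pi_1$ collapse onto the same realization plan, $\phi$ would cease to be injective, and the construction would only deliver $|S_{\overline{x}_T}|\le|S_{x_T}|$ rather than the claimed equality. Everything else is routine bookkeeping with the definitions of the support sets and of realization equivalence.
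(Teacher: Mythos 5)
Your proposal is correct and follows essentially the same route as the paper's proof: replace player 1's reduced normal-form component in each supported joint pure strategy by its realization-equivalent pure sequence-form strategy, transfer the weights, and verify equality of reaching probabilities node by node. Your explicit remark that the bijection between reduced normal-form and pure sequence-form strategies of player 1 is what secures $|S_{x_T}|=|S_{\overline{x}_T}|$ (rather than just $\le$) is a point the paper leaves implicit in the phrase ``uniquely represented,'' but it is the same argument.
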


\begin{Lemma}\label{realizationEQ}
For every hybrid-form strategy $\overline{x}_T\in\Delta(\mathcal{F}_T)$ there exists a realization-equivalent normal-form strategy $x_T\in\Delta(\Pi_T)$.
\end{Lemma}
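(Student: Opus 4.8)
The plan is to prove Lemma~\ref{realizationEQ} by an explicit construction that converts a mixed hybrid-form strategy $\overline{x}_T\in\Delta(\mathcal{F}_T)$ into a realization-equivalent mixed normal-form strategy $x_T\in\Delta(\Pi_T)$. The key observation is that the only obstacle to moving between the two representations is the sequence-form strategy $r_1$ of player~$1$ inside each $f_T=(r_1,\pi_{T\setminus\{1\}})$; the remaining components $\pi_{T\setminus\{1\}}$ are already pure normal-form strategies. So the construction reduces to replacing each $r_1\in\mathcal{R}_1$ with an equivalent distribution over pure normal-form strategies of player~$1$, and then combining with the (fixed) normal-form strategies of the other team members.

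First I would take a single pure hybrid-form strategy $f_T=(r_1,\pi_{T\setminus\{1\}})$ in the support $S_{\overline{x}_T}$. By the classical equivalence between sequence-form and normal-form strategies for a single player with perfect recall \cite{von1996efficient}, the mixed sequence-form strategy $r_1\in\mathcal{R}_1$ is realization-equivalent to some mixed normal-form strategy $x_1^{f_T}\in\Delta(\Pi_1)$, i.e.\ there exist coefficients $\lambda_{\pi_1}\geq 0$ with $\sum_{\pi_1}\lambda_{\pi_1}=1$ realizing the same reaching probabilities. Each pair $(\pi_1,\pi_{T\setminus\{1\}})$ is then a joint pure normal-form strategy $\pi_T\in\Pi_T$. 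I would therefore define
\begin{align*}
x_T(\pi_T)=\sum_{f_T\in S_{\overline{x}_T}}\overline{x}_T(f_T)\,x_1^{f_T}(\pi_1)\,\mathbf{1}\!\left[\pi_{T\setminus\{1\}}=\pi_{T\setminus\{1\}}^{f_T}\right],
\end{align*}
where $\pi_1$ and $\pi_{T\setminus\{1\}}$ are the two components of $\pi_T$, and the indicator enforces that the other members' actions match those prescribed by $f_T$. A short check shows $x_T$ is a valid probability distribution: all weights are nonnegative, and summing over $\pi_T$ collapses the inner normalization $\sum_{\pi_1}x_1^{f_T}(\pi_1)=1$, leaving $\sum_{f_T}\overline{x}_T(f_T)=1$.

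It then remains to verify realization equivalence. Since realization equivalence only concerns reaching probabilities against any adversary strategy, and these probabilities factor across players by perfect recall, it suffices to argue that for every terminal node the probability that the team's joint strategy contributes the corresponding sequence is identical under $\overline{x}_T$ and $x_T$. Conditioning on each $f_T$, the contribution of the other members is deterministic and unchanged, so the only randomness comes from player~$1$; there the equivalence $r_1\sim x_1^{f_T}$ from \cite{von1996efficient} gives matching reaching probabilities, and averaging over $f_T$ with weights $\overline{x}_T(f_T)$ preserves the match. Formally I would compute $U_T(x_T,r_n)$ by expanding the definition and reorganizing the double sum to recover $U_T(\overline{x}_T,r_n)$, using that $U_T(\pi_T,r_n)$ depends on player~$1$ only through $r_1(\mathrm{seq}_1(l))$, which the equivalence reproduces.

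The main obstacle, and the only place requiring care, is confirming that combining the per-$f_T$ decompositions into a single distribution over $\Pi_T$ does not introduce spurious correlation: because the other members' strategies are fixed within each $f_T$ and player~$1$'s mixture is drawn independently given $f_T$, the resulting joint distribution faithfully reproduces the intended correlation encoded by $\overline{x}_T$. Unlike Lemma~\ref{realizationEQLemma}, here there is no claim about support-set sizes, so the decomposition of $r_1$ may use as many pure normal-form strategies of player~$1$ as needed, which removes the most delicate constraint and makes the argument comparatively direct.
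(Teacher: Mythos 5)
Your proposal is correct and follows essentially the same route as the paper's own proof: both decompose each $r_1$ in the support into a realization-equivalent mixed normal-form strategy of player~1 via the classical sequence-form/normal-form equivalence, define $x_T(\pi_1,\pi_{T\setminus\{1\}})$ as the $\overline{x}_T$-weighted mixture of those per-strategy decompositions matched on the fixed $\pi_{T\setminus\{1\}}$ components, and verify realization equivalence by reorganizing the resulting double sum over reaching probabilities. No meaningful differences to report.
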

Using the two lemmas, we can prove that our hybrid-form strategies preserve the set of TMEsCor\footnote{Note that our proofs do not rely on the inexecutable realization-form strategies of \citeauthor{farina2018ex} \citeyearpar{farina2018ex}.}.
\begin{Theorem}\label{coroTMECorEQU}
For every normal-form strategy $x_T\in \Delta(\Pi_T)$ there exists a hybrid-form strategy $\overline{x}_T\in\Delta(\mathcal{F}_T)$ such that $U_T(x_T,\sigma_n)=U_T(\overline{x}_T,\sigma_n)$, $\forall \sigma_n\in\Sigma_n$, and vice versa.
\end{Theorem}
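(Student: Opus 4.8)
The plan is to reduce the theorem to the two realization-equivalence lemmas together with one structural observation: for each fixed adversary sequence $\sigma_n$, the extended utility $U_T(\cdot,\sigma_n)$ is a fixed linear functional of the team's node-reaching probabilities, with coefficients that do not depend on how the team randomizes. First I would rewrite both $U_T(x_T,\sigma_n)$ and $U_T(\overline{x}_T,\sigma_n)$ in a common form. For a pure profile, a leaf $l$ lies in $L_{\pi_T,\sigma_n}$ (resp. $L_{f_T,\sigma_n}$) exactly when the team plays every team action on the path to $l$ and the adversary's actions on that path are consistent with $\sigma_n$; the latter is a predicate depending only on $l$ and $\sigma_n$, never on the team. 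Letting $\rho_T(l)$ denote the probability that the team's part of the path to $l$ is played (computed from $x_T$ as the total mass of joint pure strategies reaching $l$, and from $\overline{x}_T$ via the corresponding sum involving $r_1(\text{seq}_1(l))$), I would establish
\[ U_T(x_T,\sigma_n)=\sum_{l\in L}[\,\sigma_n\text{ reaches }l\,]\,u_T(l)\,c(l)\,\rho_T^{x_T}(l),\]
and the identical expression for $U_T(\overline{x}_T,\sigma_n)$ with the same $\sigma_n$-predicate and the same coefficients $u_T(l)c(l)$. Consequently the two utilities agree as soon as the team reaching probabilities $\rho_T(l)$ agree for every $l$.

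Next I would invoke realization equivalence to match these probabilities. A single adversary sequence $\sigma_n$ can be viewed as the deterministic (pure) sequence-form strategy $r_n$ reaching $\sigma_n$ with probability one, so the factor $[\,\sigma_n\text{ reaches }l\,]\,c(l)$ is precisely the non-team part of the probability of reaching $l$ under $(\cdot,r_n)$. By the definition of realization equivalence, two realization-equivalent team strategies induce the same reaching probabilities for every node under every adversary strategy; specializing to each such $r_n$ (equivalently, using that a sequence-form adversary strategy is determined by its per-sequence masses, so it suffices to match one sequence at a time, as already noted in the text) forces $\rho_T^{x_T}(l)=\rho_T^{\overline{x}_T}(l)$ for all $l$. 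The displayed utilities then coincide for every $\sigma_n\in\Sigma_n$.

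With these pieces the theorem is immediate from Lemmas \ref{realizationEQLemma} and \ref{realizationEQ} (only their realization-equivalence conclusions are needed, not the support-size claim of Lemma \ref{realizationEQLemma}). For the forward direction, given $x_T\in\Delta(\Pi_T)$, Lemma \ref{realizationEQLemma} supplies a realization-equivalent $\overline{x}_T\in\Delta(\mathcal{F}_T)$, and the argument above yields $U_T(x_T,\sigma_n)=U_T(\overline{x}_T,\sigma_n)$ for all $\sigma_n$. For the converse, given $\overline{x}_T\in\Delta(\mathcal{F}_T)$, Lemma \ref{realizationEQ} supplies a realization-equivalent $x_T\in\Delta(\Pi_T)$, and the same computation applies.

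I expect the only genuinely delicate step to be the bridge between the abstract statement of realization equivalence---phrased for reaching probabilities of all nodes under all opponent strategies---and the per-sequence utilities $U_T(\cdot,\sigma_n)$ in the theorem statement. The care required is to verify that the $\sigma_n$-reachability predicate and the chance weights $c(l)$ factor cleanly out of the team's contribution, so that equality of node-reaching probabilities really does transfer to equality of each $U_T(\cdot,\sigma_n)$. Once the utility is recast as the fixed linear functional displayed above, the rest is routine bookkeeping, and no property of $x_T$ or $\overline{x}_T$ beyond realization equivalence is used.
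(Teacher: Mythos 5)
Your proposal is correct and follows essentially the same route as the paper: apply Lemmas \ref{realizationEQLemma} and \ref{realizationEQ} to obtain a realization-equivalent strategy in the other representation, then observe that $U_T(\cdot,\sigma_n)$ is determined by the team's leaf-reaching probabilities (the paper writes this as $\sum_{l\in L}p_{\sigma_n}(l)u_T(l)c(l)$), so realization equivalence forces the utilities to coincide for every $\sigma_n$. Your write-up merely makes explicit the factorization of the $\sigma_n$-reachability predicate and chance weights that the paper's proof treats as immediate.
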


Because the set of TMEsCor will not change if hybrid-form strategies are played, the following linear program (LP)   computes the equilibrium:
  \begin{subequations}\label{TMECorprogramNew}
\begin{align}
&\textstyle \max_{\overline{x}_T}  v(\mathcal{I}_n(\varnothing))\\
& \quad\quad \textstyle v(\mathcal{I}_n(\sigma_n)) -\sum_{I_{n,j}\in I_n:\text{seq}_n(I_{n,j})=\sigma_n} v(I_{n,j}) \nonumber\\
&\quad \quad\quad\quad\quad\quad \quad\quad\quad \leq U_T(\overline{x}_T,\sigma_n)\quad \forall \sigma_n\in \Sigma_n \label{TMECorprogramNewEq2}\\
&\quad\quad\textstyle\sum_{f_T\in \mathcal{F}_T}\overline{x}_T(f_T)=1\\
&\quad\quad \overline{x}_T(f_T)\geq 0 \quad \forall f_T\in \mathcal{F}_T.
\end{align}
\end{subequations}
In this LP, $\mathcal{I}_n(\sigma_n)$ denotes an information set in which   player $n$ takes the last action of sequence $\sigma_n$, and $v(I_{n,j})$ is the expected utility of the team in each information set $I_{n,j}$. The adversary chooses the strategy minimizing the team's utility in each information set $\mathcal{I}_n(\sigma_n)$ (represented by Eq.\eqref{TMECorprogramNewEq2}   \cite{bosansky2014exact}), and $v(\mathcal{I}_n(\varnothing))$ is hence the team's utility given $\overline{x}_T$.

\begin{algorithm}[tt]
\small
\caption{CG Algorithm with a Multilinear BRO}\label{GC}
Initialize $\mathcal{F}'_T\leftarrow\{\text{any strategy}\},\underline{v}\leftarrow0,\overline{v}\leftarrow1$\label{CMB_line1}\;
\Repeat {$\underline{v} = \overline{v} $\label{CMB_line2}}{
    $(\underline{v},\overline{x}_T,r_n) \leftarrow$    CoreLP$(\mathcal{F}'_T,\Sigma_n)$\label{CMB_line3}\;  
    $(\overline{v},f_T)\leftarrow $ 
{BRO}$(r_n)$\label{CMB_line4}\;
$\mathcal{F}'_T\leftarrow \mathcal{F}'_T\cup\{f_T\}$\label{CMB_line5}\;
}
\Return $(\overline{x}_T,r_n)$.\label{CMB_line7}
\end{algorithm}

\subsection{Column Generation with a Multilinear Oracle}\label{sectionCMB} 
Although we can formulate the problem of finding a TMECor as LP (\ref{TMECorprogramNew}), solving it requires enumerating all pure hybrid-form strategies in an infinite strategy space as variables, which is impractical. To address this problem, we introduce a CG algorithm with a Multilinear BRO (CMB), depicted in Algorithm \ref{GC}. Our CMB starts from a restricted game $(\mathcal{F}'_T,\Sigma_n)$ ($\mathcal{F}'_T$ is a subset of $\mathcal{F}_T$) and proceeds iteratively. It computes a TMECor in a subgame and checks if there exists a better strategy outside the restricted game. If the answer is positive, it expands the restricted game, and terminates otherwise.

Now we describe this dynamic in more detail. In Algorithm \ref{GC},   CoreLP$(\mathcal{F}'_T,\Sigma_n)$ in Line \ref{CMB_line3} computes a TMECor $(\overline{x}_T,r_n)$ with the team's utility $\underline{v}$ in the restricted game $(\mathcal{F}'_T,\Sigma_n)$ by solving LP \eqref{TMECorprogramNew}. On the next line, the output $f_T$ of our BRO$(r_n)$ is a pure hybrid-form strategy maximizing the team's utility $\overline{v}$ when the adversary plays $r_n$, i.e., $\arg\max_{f_T\in \mathcal{F}_T}U_T(f_T,r_n)$. We will show how to compute it efficiently later.
Thus, starting from the restricted $\mathcal{F}'_T  $ (initialized at Line \ref{CMB_line1}), CMB computes the equilibrium $(\overline{x}_T,r_n)$ with the team's utility $\underline{v}$ in the restricted game $(\mathcal{F}'_T,\Sigma_n)$ by calling CoreLP    (Line \ref{CMB_line3}),  and finds the team's best response $f_T$ with the team's utility $\overline{v}$ against the adversary's strategy $r_n$ in the equilibrium of $(\mathcal{F}'_T,\Sigma_n)$  by calling  BRO (Line \ref{CMB_line4}). Then CMB expands $\mathcal{F}'_T$ if a new strategy $f_T$ outside of $\mathcal{F}'_T$ is found (Line \ref{CMB_line5}), otherwise CMB terminates (Line \ref{CMB_line7}).

We aim to show that once the algorithm terminates, the equilibrium in the restricted game is an equilibrium in the full game $(\mathcal{F}_T,\Sigma_n)$. Unfortunately, most column-generation algorithms usually rely on the fact that the strategy space is finite, and the number of iterations is hence bounded to enforce a guarantee of convergence~\cite{bosansky2014exact,mcmahan2003planning}. Since a strategy space in our CMB is infinite, this argument can not hold. In the following proposition, we show that despite the number of pure hybrid-form strategies being infinite, the CMB will always terminate in a finite number of steps. The intuition behind our result is that if a best-response strategy $f_T$ is added to the restricted game, instead of representing a single strategy, it stands for a whole subset of hybrid-form strategies in the BRO due to the mixed sequence-form strategy of the first player, and the number of these subsets of hybrid-form strategies is finite. 
\begin{Proposition}\label{upperboundofiteration}
CMB converges in at most $2^{|\Pi_{1}|}\frac{|\Pi_T|}{|\Pi_1|}$ steps.
\end{Proposition}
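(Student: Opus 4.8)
The plan is to convert the (a priori unbounded) number of iterations into a purely combinatorial count, by charging each column the BRO returns to a finite ``support-class,'' and then showing that a non-terminating iteration always produces a class that has not appeared before. First I would record the standard column-generation bookkeeping that holds regardless of the strategy space. For any adversary strategy $r_n$ we always have $\overline{v}=\max_{f_T\in\mathcal{F}_T}U_T(f_T,r_n)\geq\max_{\overline{x}_T\in\Delta(\mathcal{F}'_T)}U_T(\overline{x}_T,r_n)=\underline{v}$, since $\mathcal{F}'_T\subseteq\mathcal{F}_T$ and any mixture over $\mathcal{F}'_T$ is an average of the values $U_T(f_T,r_n)$. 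Thus $\overline{v}\geq\underline{v}$ is a loop invariant, and the test $\underline{v}=\overline{v}$ is exactly the statement ``no pure hybrid-form strategy strictly improves the restricted value,'' which is the correctness/termination criterion for the returned $(\overline{x}_T,r_n)$.

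Next I would attach to every pure hybrid-form strategy $f_T=(r_1,\pi_{T\setminus\{1\}})$ a support-class. The realization plan $r_1$ is realization-equivalent to a mixed normal-form strategy $x_1\in\Delta(\Pi_1)$ (the single-player case underlying Lemma~\ref{realizationEQLemma} and Lemma~\ref{realizationEQ}), so I define the class of $f_T$ to be the pair $(\mathrm{supp}(x_1),\,\pi_{T\setminus\{1\}})$ with $\mathrm{supp}(x_1)\subseteq\Pi_1$. The number of such pairs is at most $2^{|\Pi_1|}\cdot|\Pi_{T\setminus\{1\}}|$, and because $\Pi_T=\Pi_1\times\Pi_{T\setminus\{1\}}$ gives $|\Pi_{T\setminus\{1\}}|=|\Pi_T|/|\Pi_1|$, this is exactly the claimed $2^{|\Pi_1|}\tfrac{|\Pi_T|}{|\Pi_1|}$. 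It therefore suffices to prove that every iteration that does not terminate enlarges the (finite) collection of support-classes seen so far.

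The core of the argument is the no-repeat step, and here I would exploit the linearity of $U_T(\cdot,r_n)$ in the first player's realization plan: for fixed $\pi_{T\setminus\{1\}}$ the map $r_1\mapsto U_T((r_1,\pi_{T\setminus\{1\}}),r_n)=\sum_{l}r_n(\mathrm{seq}_n(l))\,r_1(\mathrm{seq}_1(l))\,u_T(l)\,c(l)$ is linear in $r_1$, so its maximum over $\mathcal{R}_1$ is attained at a vertex, i.e.\ on $\overline{\mathcal{R}}_1$, equivalently at a pure joint normal-form strategy $(\pi_1,\pi_{T\setminus\{1\}})\in\Pi_T$. Consequently the best-response value $\overline{v}$ is always realized by some pure $\pi_T\in\Pi_T$, and I may take the BRO to output such a vertex. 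Then the no-repeat property follows from LP duality of the restricted game: the adversary strategy $r_n$ returned by CoreLP satisfies $U_T(\overline{x}_T,r_n)\leq\underline{v}$ for every $\overline{x}_T\in\Delta(\mathcal{F}'_T)$, so in particular $U_T(f',r_n)\leq\underline{v}$ for each column $f'\in\mathcal{F}'_T$. Hence if the vertex returned by the BRO already belongs to $\mathcal{F}'_T$ we get $\overline{v}\leq\underline{v}$, forcing $\overline{v}=\underline{v}$ and termination; contrapositively, a non-terminating iteration adds a genuinely new pure $\pi_T$. This already yields convergence in at most $|\Pi_T|=|\Pi_1|\cdot|\Pi_{T\setminus\{1\}}|\leq 2^{|\Pi_1|}\tfrac{|\Pi_T|}{|\Pi_1|}$ steps, matching the claim. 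By the vertex reduction each such $\pi_T$ lies in a distinct support-class, so the support-class count is the natural, and slightly looser, packaging of this bound that continues to hold even when the oracle returns a genuinely mixed $r_1$.

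The main obstacle is precisely the passage to genuinely mixed oracle outputs. If one does not appeal to the vertex reduction and instead works directly with a returned $f_T$ whose $r_1$ is mixed, then a single added column is only \emph{one} point of the convex set $\mathrm{conv}\{U_T((\pi_1,\pi_{T\setminus\{1\}}),\cdot):\pi_1\in\mathrm{supp}(x_1)\}$ associated with its class, and the duality inequality $U_T(f',r_n)\leq\underline{v}$ for present columns does not by itself forbid another, better point of the same class from strictly improving $\underline{v}$. Closing this cleanly is exactly where the linearity/vertex reduction must be invoked—so that within any fixed class only the finitely many pure joint normal-form strategies it contains can be best responses, and the no-repeat argument can be run at the level of those pure members (equivalently, at the level of support-classes). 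I would also verify the tie-breaking behaviour of the BRO, namely that a returned optimum can always be charged to a well-defined support-class, and that the realization equivalences of Lemma~\ref{realizationEQLemma}, Lemma~\ref{realizationEQ} and Theorem~\ref{coroTMECorEQU} let me pass between $\overline{x}_T$ and its equivalent normal-form strategy without altering any $U_T(\cdot,\sigma_n)$.
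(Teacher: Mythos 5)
Your proposal is correct and follows essentially the same route as the paper's proof: you group the infinite strategy space into the finitely many support classes $(\mathrm{supp}(x_1),\pi_{T\setminus\{1\}})$, count them as $2^{|\Pi_1|}|\Pi_T|/|\Pi_1|$, and show no class is added twice because, by linearity of $U_T(\cdot,r_n)$ in $x_1$, a mixed best response forces every hybrid strategy with the same support (and the same $\pi_{T\setminus\{1\}}$) to also be a best response, so a repeated class would give $\overline{v}\le\underline{v}$ and hence termination. Your side observation that an oracle restricted to vertices of $\mathcal{R}_1$ would yield the tighter bound $|\Pi_T|$ is sound but not needed for the stated claim, and your closing ``run the no-repeat at the level of support-classes'' is exactly the paper's argument for the genuinely mixed case.
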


Even though this result implies that there always exists a TMECor with a finite-sized support set for the team (i.e., at most $2^{|\Pi_{1}|}\frac{|\Pi_T|}{|\Pi_1|}$ strategies in the support set), it is a rather weak guarantee. We hence provide a tighter bound on the size of the team's support set\footnote{A similar result was proven for normal-form strategies earlier~\cite{celli2018computational}; however, Proposition \ref{supportUpperBound} is not its direct consequence as it relies on Lemma \ref{realizationEQLemma} that makes the connection between normal-form and hybrid-form strategies.}. That is, there exists a TMECor such that  the size of the support set of the hybrid-form strategy will not be larger than the number of adversary's sequences, which is significantly smaller than  $2^{|\Pi_{1}|}\frac{|\Pi_T|}{|\Pi_1|}$.      
\begin{Proposition}\label{supportUpperBound}
There is a TMECor $\overline{x}_T$ with $|S_{\overline{x}_T}|\leq |\Sigma_n|$. 
\end{Proposition}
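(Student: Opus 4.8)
The plan is to reduce the hybrid-form statement to the analogous support bound for normal-form strategies, which is already available, and then transport it across the two representations using Lemma \ref{realizationEQLemma}. Concretely, I would first invoke the support bound of Celli and Gatti \cite{celli2018computational}: there exists a TMECor in which the team plays a \emph{normal-form} strategy $x_T\in\Delta(\Pi_T)$ whose support satisfies $|S_{x_T}|\leq|\Sigma_n|$. This is the substantive ingredient, and I would treat it as given. If one wanted a self-contained version, it follows from a standard basic-solution (vertex) analysis of the normal-form analogue of LP \eqref{TMECorprogramNew}: the only family of constraints coupling the team's columns $x_T(\pi_T)$ to the adversary is the collection \eqref{TMECorprogramNewEq2}, indexed by $\sigma_n\in\Sigma_n$, so a basic optimal solution can activate at most $|\Sigma_n|$ of the $x_T$-variables.

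Second, I would feed this specific $x_T$ into Lemma \ref{realizationEQLemma} to obtain a hybrid-form strategy $\overline{x}_T\in\Delta(\mathcal{F}_T)$ that is realization-equivalent to $x_T$ and, crucially, satisfies $|S_{\overline{x}_T}|=|S_{x_T}|$. Chaining the two relations yields $|S_{\overline{x}_T}|\leq|\Sigma_n|$, which is the cardinality claim. Third, I would check that $\overline{x}_T$ is genuinely a TMECor and not merely a sparse distribution. Since $\overline{x}_T\sim x_T$, the two strategies induce identical reaching probabilities against \emph{every} adversary strategy, hence $U_T(\overline{x}_T,\sigma_n)=U_T(x_T,\sigma_n)$ for all $\sigma_n\in\Sigma_n$ (the utility equivalence recorded in Theorem \ref{coroTMECorEQU}), and therefore $U_T(\overline{x}_T,r_n)=U_T(x_T,r_n)$ for every sequence-form $r_n$. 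Because the team's equilibrium condition depends on its strategy only through this utility profile, and $x_T$ attains the maxmin value as the team component of some TMECor $(x_T,r_n)$, the profile $(\overline{x}_T,r_n)$ attains the same value and remains an equilibrium. Thus $\overline{x}_T$ is a TMECor with the desired support bound.

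The conceptual obstacle is exactly the point flagged in the footnote: the proposition is \emph{not} an immediate corollary of the normal-form bound, because the two representations live in different strategy spaces (and the hybrid-form space $\mathcal{F}_T$ is infinite). The bridge must preserve \emph{both} the support cardinality and the full utility profile simultaneously, and it is Lemma \ref{realizationEQLemma} that supplies a realization-equivalent witness with \emph{matched} support size --- a strictly stronger guarantee than the bare existence statement of Theorem \ref{coroTMECorEQU}. The one place to be careful is to apply Lemma \ref{realizationEQLemma} to the small-support $x_T$ produced in the first step, rather than to an arbitrary equilibrium strategy, since only the former carries the cardinality bound through the reduction.
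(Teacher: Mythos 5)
Your proposal follows exactly the paper's argument: start from the normal-form support bound of Celli and Gatti, apply Lemma \ref{realizationEQLemma} to obtain a realization-equivalent hybrid-form strategy with matched support size, and use the utility equivalence of Theorem \ref{coroTMECorEQU} to confirm the resulting profile is still a TMECor. The additional remarks on the basic-solution justification of the normal-form bound and on why the bridge must preserve support cardinality are correct but not needed beyond what the paper records.
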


Our experimental results show that the actual number of iterations before CMB converges is significantly smaller than the bound from Proposition~\ref{upperboundofiteration}. We attribute this to the fact that at least one TMECor has a small support set, as shown in Proposition~\ref{supportUpperBound}. Finally, we prove that the output of CMB is a TMECor in the full game because the team cannot find a better strategy outside the restricted game. 
That is, the output of CMB is a TMECor in the restricted game $(\mathcal{F}'_T,\Sigma_n)$ and also   a  TMECor in the full game $(\mathcal{F}_T,\Sigma_n)$.
\begin{Theorem}
  CMB converges to a TMECor in $(\mathcal{F}_T,\Sigma_n)$.
\end{Theorem}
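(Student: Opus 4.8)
The plan is to follow the standard column-generation correctness argument, adapted to our infinite hybrid-form strategy space. The backbone is a sandwich inequality $\underline{v}\le v^{*}\le\overline{v}$, where $v^{*}$ denotes the value of the full game $(\mathcal{F}_T,\Sigma_n)$; once the termination condition $\underline{v}=\overline{v}$ of Line~\ref{CMB_line2} fires, this collapses to $v^{*}=\underline{v}=\overline{v}$, and the equilibrium conditions can be read off directly. Finite termination is guaranteed by Proposition~\ref{upperboundofiteration}, so the procedure is certain to reach this state.

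First I would record that the full-game value $v^{*}$ is well-defined despite $\mathcal{F}_T$ being infinite. Theorem~\ref{coroTMECorEQU} lets me identify $\max_{\overline{x}_T\in\Delta(\mathcal{F}_T)}\min_{r_n}U_T(\overline{x}_T,r_n)$ with the corresponding quantity over the finite set $\Delta(\Pi_T)$, so the minimax theorem applies and $v^{*}=\max_{\overline{x}_T}\min_{r_n}U_T=\min_{r_n}\max_{\overline{x}_T}U_T$ exists and equals the TMECor value.

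Then I would establish the two bounds. For the lower bound $\underline{v}\le v^{*}$: CoreLP optimizes the team over $\Delta(\mathcal{F}'_T)\subseteq\Delta(\mathcal{F}_T)$ while the adversary already ranges over all of $\Sigma_n$, so restricting the team only shrinks its maxmin value. For the upper bound $v^{*}\le\overline{v}$: the adversary strategy $r_n$ returned by CoreLP is one particular feasible strategy, hence $v^{*}=\min_{r_n'}\max_{\overline{x}_T}U_T(\overline{x}_T,r_n')\le\max_{\overline{x}_T}U_T(\overline{x}_T,r_n)$; since $U_T(\cdot,r_n)$ is linear in the mixture weights, its maximum over $\Delta(\mathcal{F}_T)$ is attained at a pure strategy, so this equals $\max_{f_T\in\mathcal{F}_T}U_T(f_T,r_n)=\overline{v}$, which is exactly what BRO$(r_n)$ computes. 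Finally I would verify the equilibrium conditions at termination: the adversary condition $r_n\in\arg\min_{r_n'}U_T(\overline{x}_T,r_n')$ holds independently of termination, since CoreLP best-responds the adversary against $\overline{x}_T$ over the full space $\Sigma_n$; the team condition is where termination enters, as $U_T(\overline{x}_T,r_n)=\underline{v}=\overline{v}=\max_{f_T\in\mathcal{F}_T}U_T(f_T,r_n)$ makes $\overline{x}_T$ a best response to $r_n$ over all of $\Delta(\mathcal{F}_T)$. Mutual best response gives that $(\overline{x}_T,r_n)$ is an NE, i.e.\ a TMECor, in $(\mathcal{F}_T,\Sigma_n)$, and Theorem~\ref{coroTMECorEQU} certifies it is a genuine TMECor of the original game.

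The step I expect to be the main obstacle is the upper bound together with the well-definedness of $v^{*}$: both require care because $\mathcal{F}_T$ is infinite, so I cannot invoke the finite minimax theorem directly and must route through the normal-form equivalence of Theorem~\ref{coroTMECorEQU} (and the fact that a linear objective over the convex hull $\Delta(\mathcal{F}_T)$ is maximized at an extreme/pure point, which is precisely what BRO attains over the polytope $\mathcal{R}_1$ and the finite set $\Pi_{T\setminus\{1\}}$) to make the duality argument rigorous.
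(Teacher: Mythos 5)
Your proposal is correct and follows essentially the same route as the paper's proof: finite termination via Proposition~\ref{upperboundofiteration}, then mutual best response at termination ($r_n$ is optimal against $\overline{x}_T$ because CoreLP already optimizes the adversary over all of $\Sigma_n$, and $\overline{x}_T$ is optimal against $r_n$ because $\underline{v}=\overline{v}=\max_{f_T\in\mathcal{F}_T}U_T(f_T,r_n)$). The sandwich inequality $\underline{v}\le v^{*}\le\overline{v}$ and the explicit appeal to Theorem~\ref{coroTMECorEQU} for well-definedness of $v^{*}$ are additional scaffolding the paper omits, but they do not change the substance of the argument.
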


Moreover, we show that if the BRO checks for the existence of a best-response strategy with a gain at most $\epsilon$ outside the restricted game, the CMB converges to an approximate TMECor. This is achieved simply by changing the termination condition from    $\overline{v}=\underline{v}$ to $\overline{v}-\underline{v}\leq \epsilon$.
In this case, the output of CMB is a TMECor in the restricted game $(\mathcal{F}'_T,\Sigma_n)$ and also   an approximate TMECor in the full game $(\mathcal{F}_T,\Sigma_n)$.
\begin{Proposition}
If CMB terminates with   $\overline{v}-\underline{v}\leq \epsilon$, then its output $(\overline{x}_T,r_n)$ is an $\epsilon$-TMECor in $(\mathcal{F}_T,\Sigma_n)$.
\end{Proposition}
Now we introduce our BRO. Note that:
\begin{align*}
    U_T(f_T,r_n)&\textstyle \!=\!\sum_{l\in L_{f_T,r_n}}\!r_{n}(\text{seq}_n(l))r_{1}(\text{seq}_1(l))u_T(l)c(l)\\
    &\textstyle \!=\!\sum_{l\in L} \!u_{T}(l)c(l)r_n(\text{seq}_n (l) ) \prod_{i\in T} r_i(\text{seq}_i(l) ).
\end{align*}
Then the BRO can be formulated as the following multilinear program that computes a best response $f_T$ against a sequence-form strategy  $r_n$ of the adversary, i.e., $\arg\max_{f_T\in \mathcal{F}_T}U_T(f_T,r_n)$:
\begin{subequations}\label{TMECorBRO}
\begin{align}
&  \max_{\times_{ i\in T} r_{i}} \textstyle  \sum_{l\in L} u_{T}(l)c(l)r_n(\text{seq}_n (l) ) \prod_{i\in T} r_i(\text{seq}_i(l) )\\
&\quad\quad\text{Eqs.}(\ref{sqconstraint1ad})-(\ref{sqconstraint2ad})\quad \forall i\in T\label{TMECorBROcon1}\\
&\quad\quad r_i(\sigma_i)\in\{0,1\} \quad \forall  \sigma_i\in \Sigma_i,i\in T\setminus\{1\}\\
&\quad\quad r_{1}(\sigma_{1})\in [0,1] \quad  \forall \sigma_{1}\in \Sigma_{1},\label{TMECorBROcon3}
\end{align}
\end{subequations}
where $r_i\in \overline{\mathcal{R}}_i$ is realization-equivalent to $\pi_i\in\Pi_i$ for all $i$ in $ T\setminus\{1\}$ in a hybrid-form strategy $f_T$.
The idea of the BRO is that it expresses the probability of all team players reaching each terminal node using a multilinear term $\prod_{i\in T}r_i(\text{seq}_i(l))$  in $U_T(f_T,r_n)$ with only $\sum_{i\in T\setminus\{1\}}|\Sigma_i|$ integer variables.

\subsection{Associated Representation Technique}
\label{sectionART}
Because problem \eqref{TMECorBRO} is multilinear and thus difficult to solve, we develop a novel global optimization technique for finding a solution efficiently. We call the method the Associated Representation Technique (ART). The ART represents the multilinear terms exactly through linear constraints. Moreover, it reduces the feasible solution space by using associated constraints for the equivalence relations between the individual multilinear terms. From the computational perspective, ART's two essential properties are that (i) it does not require recursive expansion to represent the multilinear terms exactly, and (ii) it generates the associated constraints efficiently.


\subsubsection{Multilinear Representation (MR)}
First, we show how to transform problem \eqref{TMECorBRO} into an equivalent MILP exactly,  without introducing new integer variables. For this purpose, for each multilinear term $w(\sigma_T)= \prod_{ i\in T }r_i(\sigma_T(i))$, where  $\sigma_T(i)$ is the sequence of player $i$ in joint sequence $\sigma_T\in \Sigma_T $ with $\Sigma_T =\times_{i\in T}\Sigma_i$,  $r_i\in \overline{\mathcal{R}}_i $ for all $i\in T\setminus\{1\}$, and $r_1\in \mathcal{R}_1$, we introduce the following MR constraints:
\begin{subequations}
\begin{align}
&0\leq w(\sigma_T)\leq r_i(\sigma_T(i)) \quad \forall  i\in T\setminus\{1\} \label{werconstraint1}\\
&0\leq r_{1}(\sigma_T( 1) ) - w(\sigma_T) \leq n-2 - \!\!\!\!\! \sum_{i\in T\setminus\{ 1 \}} \!\!\!\!\!  r_i(\sigma_T(i)  ). \label{werconstraint2}
\end{align}
\end{subequations}

Note that the multilinear term $\prod_{ i\in T }r_i(\sigma_T(i))$ is equal to the continuous variable $r_1(\sigma_T(1))$ if all binary variables are set to 1, and it is 0 if there is a binary variable with value 0. Now we show that variable $w(\sigma_T)$ in Eqs.(\ref{werconstraint1})--(\ref{werconstraint2}) exactly represents the multilinear term $\prod_{ i\in T }r_i(\sigma_T(i))$.
\begin{Proposition}\label{MultilinearRepresentation}
  $\prod_{ i\in T }r_i(\sigma_T(i))$ with $r_i\in \overline{\mathcal{R}}_i $ for all $i\in T\setminus\{1\}$ and $r_1\in \mathcal{R}_1$ is exactly represented by $w(\sigma_T)$ in Eqs.(\ref{werconstraint1}) and  (\ref{werconstraint2}).
\end{Proposition}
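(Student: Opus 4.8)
The plan is to read ``exactly represented'' as the statement that, for every admissible setting of the individual realization variables, the MR constraints (\ref{werconstraint1})--(\ref{werconstraint2}) are satisfiable by $w(\sigma_T)$ if and only if $w(\sigma_T)$ equals the product $\prod_{i\in T}r_i(\sigma_T(i))$; equivalently, once the values $r_i(\sigma_T(i))$ are fixed, the constraints determine a unique value of $w(\sigma_T)$, and that value is the product. Throughout I abbreviate $r_1:=r_1(\sigma_T(1))\in[0,1]$ for the single continuous variable, $b_i:=r_i(\sigma_T(i))\in\{0,1\}$ for the $n-2$ binary variables indexed by $i\in T\setminus\{1\}$ (recall that the team has $n-1$ members, so $|T\setminus\{1\}|=n-2$), and $P:=r_1\prod_{i\in T\setminus\{1\}}b_i$ for the target product. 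The whole argument is then a case split on the binary variables, which is exactly the regime where the big-$M$-style right-hand side of (\ref{werconstraint2}) either activates or not.

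First I would treat the case in which all binaries are one. If $b_i=1$ for every $i\in T\setminus\{1\}$, then $\sum_{i\in T\setminus\{1\}}b_i=n-2$, so the upper bound in (\ref{werconstraint2}) collapses to $r_1-w(\sigma_T)\le 0$; together with its lower bound $r_1-w(\sigma_T)\ge 0$ this forces $w(\sigma_T)=r_1$, while (\ref{werconstraint1}) reduces to the consistent $0\le w(\sigma_T)\le 1$. Since here $P=r_1\cdot 1=r_1$, the unique feasible value coincides with the product. I would then treat the complementary case in which $b_j=0$ for some $j\in T\setminus\{1\}$: the instance of (\ref{werconstraint1}) for $i=j$ is $0\le w(\sigma_T)\le 0$, forcing $w(\sigma_T)=0$, which equals $P=0$. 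In both cases $w(\sigma_T)$ is pinned to the product, giving the claimed equivalence.

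To finish I would verify that the pinned value is genuinely feasible for the remaining constraints, so that the equivalence holds in both directions. The only nontrivial check is (\ref{werconstraint2}) in the second case: with $w(\sigma_T)=0$ its lower bound gives $r_1\ge 0$, and for the upper bound note that at least one binary vanishes, so $\sum_{i\in T\setminus\{1\}}b_i\le n-3$ and hence $n-2-\sum_{i\in T\setminus\{1\}}b_i\ge 1\ge r_1$. I do not expect a genuinely hard step here: the content of the proposition is the correctness of the linearization, and the single subtle point is that the constant $n-2$ in (\ref{werconstraint2}) is chosen to equal the number of binary team players, so that the constraint binds tightly (forcing $w(\sigma_T)=r_1$) precisely when all binaries are active and is otherwise slack. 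Making this interaction explicit, rather than the arithmetic itself, is the crux of the argument.
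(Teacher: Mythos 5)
Your proposal is correct and follows essentially the same route as the paper's proof: a case split on whether all binary variables $r_i(\sigma_T(i))$, $i\in T\setminus\{1\}$, equal one, using Eq.~(\ref{werconstraint2}) to pin $w(\sigma_T)=r_1(\sigma_T(1))$ in the first case and Eq.~(\ref{werconstraint1}) to force $w(\sigma_T)=0$ in the second. Your additional check that the pinned value remains feasible for the non-binding constraint in each case is a small completeness touch the paper leaves implicit, not a different argument.
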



\subsubsection{Efficient Generation of Associated Constraints}
By using the MR constraints, problem \eqref{TMECorBRO} becomes an MILP, which can be solved using a standard branch-and-bound approach with an LP relaxation \cite{morrison2016branch}. However, relaxing the MR constraints may result in a much larger feasible solution space. To be more specific, as a consequence of making the variable $r_i(\sigma_T(i))$ real-valued, the variable $w(\sigma_T)$ may no longer exactly represent $\prod_{ i\in T }r_i(\sigma_T(i))$ in  Eqs.(\ref{werconstraint1}) and (\ref{werconstraint2}), as intended. Therefore, we aim to reduce the feasible solution space of $w(\sigma_T)$. For this purpose, we generate associated constraints enforcing equivalence relations between multilinear terms, that are based on network-flow constraints (\ref{sqconstraint1ad})-(\ref{sqconstraint2ad}) for the sequence-form strategies. As an example, suppose that we have the multilinear terms $w,w_1,w_2$ and $w'$ with $w=r_1(\sigma_1)w', w_1=r_1(\sigma_1a)w', w_2=r_1(\sigma_1b)w'$. A constraint for the sequence-form strategy $r_1$ requires that $r_1(\sigma_1)= r_1(\sigma_1a)+ r_1(\sigma_1b)$. Therefore, we can introduce an associated constraint $w=w_1+w_2$. Adding associated constraints immediately rules out some candidate solutions, which effectively reduces the MILP's solution space and results in faster computation.

\begin{table*}
  \centering
   \scalebox{1.0}{
   \begin{tabular}{l|ll|lllllll}
     \hline
       EFG   &   $|L|$  & $|\Sigma_i|$ &    CMB  & CMB/H    &CMB/A   &    CMB/ART &CMB/ART/H  & C18 & F18 \\\hline
      3K4&312&33&{\bf 0.7s} &0.7s&2.1s&4s& 6s  &6.8s&1.2s\\
      3K6&1560&49&{\bf2s}&2s &20s&191s&479s  &$>$5h&12s\\
      3K8&4368&65&{\bf4s}&4s &497s&7160s& $>$5h &&210s\\
      3K10&9360&81&{\bf 5s}& 6s &10530s&$>$5h&  &&3541s \\
      3K12&17160&97&{\bf 10s}&10s &$>$5h&&&&$>$5h\\
            4L3$_1$&30600&219&{\bf68s}&165s&&&&&\\
                  4L3$_2$& 638064 &219&{\bf1264s}&2155s&&&&&\\
                        3L3&249480&457&{\bf 4916s}&6500s&&&&&\\
                        4K9&99792&145&{\bf2.6h}&3.8h&&&&&\\
      3L5$^*$&10020&1001&{\bf4.4h}&$>$6h&&&&&\\

      \hline
   \end{tabular}
   }
      \caption{The runtimes of algorithms computing TMECor. The difficulty of finding a solution increases from top to bottom. We use the notation `$>n$h' to indicate that an algorithm did not terminate after $n$ hours on the current and all larger instances. We assume that the largest 3L5$^*$ instance has five cards, and team players do not take action ``raising'' in 4L3$_1$ (6 cards) and 4L3$_2$. }\label{kuhnLeducpokerTMECorResults}
\end{table*}

\begin{table}
  \centering
   \begin{tabular}{p{12mm}|p{6mm}p{6mm}p{6mm}p{6mm}p{5mm}p{5mm}p{5.5mm} }
     \hline
       EFG      &   5K11  &    5K12 &5K13 & 6K7&6K8& 6K9&7K7 \\\hline
      CMB   &{\bf35s}&{\bf58s}&{\bf104s}&{\bf17s}&{\bf64s}&{\bf 216s}&{\bf56s}\\
      CMBZ20  &2802s&6319s&$>$3h&2009s&$>$3h&$>$3h &$>$15h\\
      \hline
   \end{tabular}
       \caption{The runtimes of the CMB algorithm and the CMBZ20 algorithm for computing TMECor in larger games. We use the same notation as in Table~\ref{kuhnLeducpokerTMECorResults}.
    Here, team players  choose actions in information sets reaching by sequence $\varnothing$ and then take action 'calling' in other information sets.
       }\label{GACTMECorResults}
\end{table}
Because the associated constraints are closely related to the network-flow constraints of sequence-form strategies, they can be generated in a similar manner---through information sets. And because the variable  $w(\times_{i\in T}\text{seq}_i(l))$ used by MR constraints to represent the product $\prod_{ i\in T }r_i(\text{seq}_i(l))$ involves all team members' sequences, the associated constraints are generated for all information sets of all members. 
For example, consider a four-player Kuhn poker game, in which a terminal node is reached by
the team's joint sequence tuple $(\text{\it J:/cccr:c, Q:/cccrc:c, T:/cccrcc:c})$. The three sequences are taken by three team players in information sets $\text{\it J:/cccr:, Q:/cccrc:, T:/cccrcc:}$, respectively. Assume that the information set $\text{\it  T:/cccrcc:}$ of player 3 is reachable by a sequence $\text{\it T:/cc:c}$. The information set contains a node $(\text{\it J:/cccr:c, Q:/cccrc:c, {T:/cc:c},} \text{\it K:/ccc:r})$, specified by one sequence per each player. There are two possible actions that can be taken: action $c$ and action $f$. The network-flow constraint associated with this information set is hence
\begin{align*} r_3(\text{\it T:/cc:c})=r_3(\text{\it T:/cccrcc:c})+r_3(\text{\it T:/cccrcc:f}).
\end{align*} 
The corresponding associated constraint is
\begin{align*}
&w(\text{\it J:/cccr:c, Q:/cccrc:c, {T:/cc:c}})\\
=&w(\text{\it J:/cccr:c, Q:/cccrc:c, {T:/cccrcc:c}}) \\
&+w(\text{\it J:/cccr:c, Q:/cccrc:c, {T:/cccrcc:f}}).
\end{align*}
Now assume that there is another node $(\text{\it J:/cccr:c, K:/cccrc:c,} $ $\text{\it{T:/cc:c}, Q:/ccc:r})$ in the same information set, and a terminal node with the team's joint sequence $(\text{\it J:/cccr:c,}$ $\text{\it K:/cccrc:c, T:/cccrcc:c})$ reachable by action $c$ of player 3. The following associated constraint is generated:
\begin{align*}&w(\text{\it J:/cccr:c, K:/cccrc:c, {T:/cc:c}})\\ 
=& w(\text{\it J:/cccr:c, K:/cccrc:c, {T:/cccrcc:c}}) \\
&+ w(\text{\it J:/cccr:c, K:/cccrc:c,  {T:/cccrcc:f}}).
\end{align*}
Using the same approach, we can generate associated constraints in this Kuhn poker game in all team players' information sets. More details can be found in Appendix \ref{Example4pKuhnInf}.

Therefore, in a general EFG, in each information set $I_{i,j}$ of a team member $i$, the algorithm for generating associated constraints needs to enumerate all the team's joint sequences leading to $I_{i,j}$, which correspond to different nodes in $I_{i,j}$. We denote this set of sequences as $\Sigma_T(I_{i,j})$ and use seq$_i(I_{i,j})=\sigma_T(i)$ for all $\sigma_T=(\sigma_T(i),\sigma_{T\setminus \{i\}})\in \Sigma_T(I_{i,j})$. The associated constraints for $I_{i,j}$ are then
\begin{equation}
\begin{aligned}
\label{associatedConstraintEQ}
&\!\!\!\! \textstyle w(\sigma_T)\!=\!\sum_{a\in \chi(I_{i,j})} \!w(\sigma_T(i)a,\sigma_{T\setminus \{i\}}) \\
&\quad \quad\quad\quad \quad \quad \forall \sigma_T\in \Sigma_T(I_{i,j}), I_{i,j}\in I_i,i\in T.
\end{aligned}
\end{equation}
Generating all the constraints can be thus done in time $O(\sum_{i\in T}\sum_{I_{i,j}\in I_i}|\Sigma_T(I_{i,j})|)$. The resulting MILP for representing problem (\ref{TMECorBRO}) using Eqs.(\ref{werconstraint1})-(\ref{werconstraint2}) and (\ref{associatedConstraintEQ}) can be formulated as follows:
\begin{subequations}\label{TMECorBROART}
\begin{align}
&  \max_{\times_{\!i\in T} r_{i}} \! \sum_{l\in L}u_{T}(l)c(l)r_n(\text{seq}_n\!(l)\!)w(\times_{i\in T}\text{seq}_i(l))\\
&\quad\quad\text{Eqs.}(\ref{TMECorBROcon1})-(\ref{TMECorBROcon3}), (\ref{associatedConstraintEQ}) \\
&\quad\quad\text{Eqs.} (\ref{werconstraint1})-(\ref{werconstraint2})\quad \forall w(\times_{i\in T}\text{seq}_i(l)), l\in L.
\end{align}
\end{subequations}
Our final theorem proves that associated constraints preserve the sequence-form strategy space, making the solution of formulation (\ref{TMECorBROART}) also a feasible solution of our BRO in CMB. The intuition is that associated constraints are consistent with the sequence-form constraints and hence do not alter the space of feasible sequence-form solutions in Problem (\ref{TMECorBRO}).  
\begin{Theorem}\label{theoremFeasibleART}
The solution of Problem (\ref{TMECorBRO}) solves Problem (\ref{TMECorBROART}). 
\end{Theorem}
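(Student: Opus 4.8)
The plan is to prove that problems (\ref{TMECorBRO}) and (\ref{TMECorBROART}) are equivalent by exhibiting a feasibility- and value-preserving correspondence between their solutions, so that an optimizer of (\ref{TMECorBRO}) extends to an optimizer of (\ref{TMECorBROART}) and conversely. Concretely, I would take any feasible point $(r_i)_{i\in T}$ of the multilinear program (\ref{TMECorBRO}) and extend it to a candidate for the MILP (\ref{TMECorBROART}) by setting, for every joint sequence $\sigma_T$ indexing some $w$-variable, $w(\sigma_T):=\prod_{i\in T} r_i(\sigma_T(i))$. Since (\ref{TMECorBRO}) already enforces Eqs.(\ref{TMECorBROcon1})--(\ref{TMECorBROcon3}), which are exactly the sequence-form and integrality constraints carried over verbatim into (\ref{TMECorBROART}), the only thing left to check is that this $w$-assignment satisfies the remaining constraints of (\ref{TMECorBROART}): the MR constraints (\ref{werconstraint1})--(\ref{werconstraint2}) and the associated constraints (\ref{associatedConstraintEQ}).

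Feasibility of the MR constraints is immediate from Proposition \ref{MultilinearRepresentation}: with $r_i\in\overline{\mathcal{R}}_i$ for $i\in T\setminus\{1\}$ and $r_1\in\mathcal{R}_1$, the value $w(\sigma_T)=\prod_{i\in T}r_i(\sigma_T(i))$ is precisely the quantity that Eqs.(\ref{werconstraint1})--(\ref{werconstraint2}) represent exactly, so those constraints hold by construction. The heart of the argument---and what I expect to be the main obstacle---is verifying the associated constraints (\ref{associatedConstraintEQ}), because they link $w$-variables at internal joint sequences that need not correspond to terminal nodes appearing in the objective or the MR block. Fix a team member $i$, an information set $I_{i,j}\in I_i$, and a joint sequence $\sigma_T\in\Sigma_T(I_{i,j})$, so that $\sigma_T(i)=\text{seq}_i(I_{i,j})$. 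Substituting the definition of $w$ and observing that only player $i$'s component is extended (from $\sigma_T(i)$ to $\sigma_T(i)a$) while the other components $\sigma_{T\setminus\{i\}}$ are held fixed, the right-hand side of (\ref{associatedConstraintEQ}) factors and collapses:
\begin{align*}
\sum_{a\in\chi(I_{i,j})} r_i(\sigma_T(i)a)\!\!\prod_{k\in T\setminus\{i\}}\!\! r_k(\sigma_T(k)) &= \Big(\sum_{a\in\chi(I_{i,j})} r_i(\sigma_T(i)a)\Big)\!\!\prod_{k\in T\setminus\{i\}}\!\! r_k(\sigma_T(k))\\
&= r_i(\sigma_T(i))\!\!\prod_{k\in T\setminus\{i\}}\!\! r_k(\sigma_T(k)) = w(\sigma_T),
\end{align*}
where the second equality uses the flow constraint (\ref{sqconstraintsecondad}) applied to player $i$ at $I_{i,j}$ with $\sigma_i=\text{seq}_i(I_{i,j})=\sigma_T(i)$, giving $\sum_{a\in\chi(I_{i,j})}r_i(\sigma_T(i)a)=r_i(\sigma_T(i))$. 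This matches the left-hand side of (\ref{associatedConstraintEQ}). Note the argument is uniform over all members, including $i=1$, since the flow constraints (\ref{sqconstraintsecondad}) hold for $r_1\in\mathcal{R}_1$ as well.

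To close the equivalence I would then confirm that objective values agree and that the converse direction holds. For any feasible $(r_i)_{i\in T}$ of (\ref{TMECorBRO}), the induced $w$ satisfies $w(\times_{i\in T}\text{seq}_i(l))=\prod_{i\in T}r_i(\text{seq}_i(l))$ for every $l\in L$, so the objective of (\ref{TMECorBROART}) at $(r,w)$ coincides term-by-term with that of (\ref{TMECorBRO}) at $r$. Conversely, any feasible $(r,w)$ of (\ref{TMECorBROART}) has its $r$-part satisfying Eqs.(\ref{TMECorBROcon1})--(\ref{TMECorBROcon3}), hence feasible for (\ref{TMECorBRO}), and by Proposition \ref{MultilinearRepresentation} the MR constraints force $w(\times_{i\in T}\text{seq}_i(l))=\prod_{i\in T}r_i(\text{seq}_i(l))$, so again the objectives match. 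Since the two problems share the same feasible $r$-region and the same objective over it, every optimal solution of (\ref{TMECorBRO}) extends to an optimal solution of (\ref{TMECorBROART}), establishing the claim. The one point requiring care throughout is that the associated constraints merely re-express identities already implied by the flow constraints; they therefore cannot remove any feasible sequence-form solution, and only tighten the $w$-polytope in the relaxed MILP, which is the practical benefit emphasized in the construction.
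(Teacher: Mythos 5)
Your proposal is correct and follows essentially the same route as the paper's proof: both arguments reduce feasibility in Problem (\ref{TMECorBROART}) to showing that, once $w(\sigma_T)=\prod_{i\in T}r_i(\sigma_T(i))$ (which satisfies Eqs.(\ref{werconstraint1})--(\ref{werconstraint2}) by Proposition \ref{MultilinearRepresentation}), the associated constraints (\ref{associatedConstraintEQ}) are just the flow constraint (\ref{sqconstraintsecondad}) multiplied by the fixed factor $\prod_{k\in T\setminus\{i\}}r_k(\sigma_T(k))$ --- the paper phrases this as a contradiction while you argue it directly. Your final paragraph on value equality and the converse direction reproduces the content the paper defers to Corollary \ref{theoremApproachART}, so nothing is missing.
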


Proposition \ref{MultilinearRepresentation} guarantees that solutions of both problems will share the same value. The result, however, is even stronger: the   optimal solution of Problem (\ref{TMECorBROART}) is also optimal for Problem (\ref{TMECorBRO}). Thus, it is a best response against the adversary's strategy $r_n$. 
\begin{Corollary} \label{theoremApproachART} For any strategy $r_n$ of the adversary, the optimal solution of Problem (\ref{TMECorBROART}) is a best response against $r_n$.   
\end{Corollary}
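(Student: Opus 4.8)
The plan is to show that Problem (\ref{TMECorBROART}) and Problem (\ref{TMECorBRO}) share the same optimal objective value, and that any optimal solution of the former, restricted to the sequence-form variables $\times_{i\in T}r_i$, is an optimal solution of the latter. Since Problem (\ref{TMECorBRO}) is by construction the maximization of $U_T(f_T,r_n)$ over all pure hybrid-form strategies $f_T=(r_1,\pi_{T\setminus\{1\}})\in\mathcal{F}_T$, its optimizers are exactly the best responses against $r_n$; hence establishing this correspondence immediately yields the corollary. The two ingredients I would combine are Proposition \ref{MultilinearRepresentation}, which gives the exact value-equality between the two objectives at integral assignments, and Theorem \ref{theoremFeasibleART}, which guarantees that the associated constraints remove no feasible sequence-form solution.

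First I would establish $\mathrm{OPT}(\ref{TMECorBROART})\ge\mathrm{OPT}(\ref{TMECorBRO})$. Let $\times_{i\in T}r_i^{\circ}$ be an optimal solution of Problem (\ref{TMECorBRO}). By Theorem \ref{theoremFeasibleART} it remains feasible for Problem (\ref{TMECorBROART}) once extended with $w^{\circ}(\sigma_T)=\prod_{i\in T}r_i^{\circ}(\sigma_T(i))$, and by Proposition \ref{MultilinearRepresentation} this extension satisfies Eqs.(\ref{werconstraint1})--(\ref{werconstraint2}) and makes the two objectives coincide. Thus (\ref{TMECorBROART}) admits a feasible point of value $\mathrm{OPT}(\ref{TMECorBRO})$, giving the inequality.

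Next I would prove the reverse inequality together with the optimality transfer. Let $(\times_{i\in T}r_i^{*},w^{*})$ be an optimal solution of Problem (\ref{TMECorBROART}). The key observation is that the integrality constraints $r_i(\sigma_i)\in\{0,1\}$ for $i\in T\setminus\{1\}$ are retained in (\ref{TMECorBROART}), so Proposition \ref{MultilinearRepresentation} applies at $r^{*}$ and pins $w^{*}(\sigma_T)=\prod_{i\in T}r_i^{*}(\sigma_T(i))$. Hence the optimal objective of (\ref{TMECorBROART}) equals the objective of (\ref{TMECorBRO}) evaluated at $r^{*}$; moreover $r^{*}$ satisfies (\ref{TMECorBROcon1})--(\ref{TMECorBROcon3}) and is therefore feasible for (\ref{TMECorBRO}). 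Writing $f_T^{*}=(r_1^{*},\pi_{T\setminus\{1\}}^{*})$ with $\pi_{T\setminus\{1\}}^{*}$ realization-equivalent to $(r_i^{*})_{i\in T\setminus\{1\}}$, we get $\mathrm{OPT}(\ref{TMECorBRO})\ge U_T(f_T^{*},r_n)=\mathrm{OPT}(\ref{TMECorBROART})$. Combining the two inequalities yields equality of the optima, and since $r^{*}$ is feasible for (\ref{TMECorBRO}) and attains this common value, $f_T^{*}$ is optimal for (\ref{TMECorBRO}), i.e., a best response against $r_n$.

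The main obstacle I anticipate is not algebraic but conceptual: ensuring that Proposition \ref{MultilinearRepresentation} may legitimately be invoked at an optimum of the MILP (\ref{TMECorBROART}), since its exact-representation guarantee holds only at integral assignments of the binary variables, whereas the associated constraints (\ref{associatedConstraintEQ}) were introduced precisely to tighten the relaxation where integrality is dropped. The resolution is that the corollary concerns the optimal solution of the MILP itself, at which integrality is enforced, so the MR constraints (\ref{werconstraint1})--(\ref{werconstraint2}) do fix $w^{*}$ to the true products; and the only remaining worry---that adding the associated constraints might discard the genuine best response---is exactly what Theorem \ref{theoremFeasibleART} rules out.
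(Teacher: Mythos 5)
Your proposal is correct and follows essentially the same route as the paper's own proof: it combines Theorem \ref{theoremFeasibleART} (no feasible sequence-form solution is lost by adding the associated constraints) with Proposition \ref{MultilinearRepresentation} (at integral assignments the MR constraints pin $w$ to the true products, so the objectives agree) to transfer optimality from Problem (\ref{TMECorBROART}) back to Problem (\ref{TMECorBRO}). Your version merely spells out the two-inequality bookkeeping that the paper's one-sentence argument leaves implicit.
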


\begin{table}[t]
  \centering
   \begin{tabular}{l|p{5mm}p{5mm}llll}
     \hline
       EFG      &   3K4  &    3K6 &3K8 & 3K10&3K12& 3L3 \\\hline
      Iterations   &14&36&47&45&52&1022\\
      Support size  &3&5&8&6&10&63\\
      \hline
   \end{tabular}
       \caption{The number of iterations until CMB converges, and the size of the support set of the team's TMECor strategy.
       }\label{TMECorResultsSupportSizeIteration}
\end{table}

\begin{table}[t]
  \centering
   \begin{tabular}{p{6mm}|p{6mm}p{6mm}p{6mm}p{6mm}p{6mm}p{6mm}p{6mm} }
     \hline
       $\epsilon$   &   0.1  & 0.08 &    0.06  &    0.04  &    0.02  & 0.01&0.008  \\\hline
      CMB&{\bf0.41s} &{\bf 0.41s} &{\bf0.41s} &{\bf0.50s} &{\bf0.50s} &{\bf0.58s} &{\bf0.58s}\\
      FTP& 0.55s& 0.71s&0.82s &1.3s &3.8s&8.0s&11.0s\\\hline
      CMB& {\bf 0.08s} &{\bf0.08s} &{\bf0.16s} &{\bf0.16s} &{\bf0.29s} &{\bf0.50s} &{\bf0.87s}\\
      FTP&4.0s &6.6s &8.1s & 15.5s&72.3s&181s&243s\\\hline
      CMB& {\bf0.23s} &{\bf0.23s} &{\bf0.31s} &{\bf0.31s} &{\bf0.64s} &{\bf1.1s} &{\bf1.1s}\\
      FTP&34.6s &34.6s &67.7s &94.9s &171s&382s&458s\\\hline
      CMB&  {\bf2s} &{\bf 3s} &{\bf6s } &{\bf 13s} &{\bf41s} &{\bf87s} &{\bf111s}\\
      FTP&228s &307s &458s &689s &1574s&4882s&$>$5h\\\hline
      CMB &{\bf 5s} &{\bf 7s} &{\bf 13s} &{\bf 28s} &{\bf93s} &{\bf745s} &{\bf1533s}\\
      FTP&188s &251s & 362s& 619s&$>$5h&&\\\hline
      CMB  &{\bf23s} &{\bf27s} &{\bf37s} &{\bf108s} &{\bf490s} &{\bf3357s} &{\bf8221s}\\
      FTP&164s &   215s&371s &920s&$>$5h&&\\\hline
   \end{tabular}
       \caption{The runtimes of algorithms computing $\epsilon\Delta_u$-TMEsCor. The games from the top to the bottom are 3K4, 3K8, 3K12, 3L3, 3L4, and 3L5.   $\Delta_u=6$ for $3Kr$ and $\Delta_u=21$ for $3Lr$. The sizes and  notations are the same as in Table~\ref{kuhnLeducpokerTMECorResults}. In addition,   $|L|\approx 10^6$ with $|\Sigma_i|=801$ for 3L4, and $|L|\approx 3\cdot10^6$ with $|\Sigma_i|=1241$ for 3L5. 
       }\label{kuhnLeducpokerEpsilonTMECorResults}
\end{table}

\begin{table}[t]
  \centering
   \begin{tabular}{ lp{11mm}|p{8.5mm}p{8.5mm}p{8.5mm}p{8.5mm}p{8.5mm} }
     \hline
     \multicolumn{2}{c|}{   EFG}       &   3K8  &    3K9 &3K10 & 3K11&3K12 \\\hline
   \multicolumn{1}{c|}{\multirow{2}{*}{$t$  }}  & TMECor    &{\bf4s}&{\bf4s}&{\bf5s}&{\bf10s}&{\bf10s}\\
     \multicolumn{1}{c|}{}& TME &4s&5s&84s&437s&118s\\\hline
     \multicolumn{1}{c|}{\multirow{3}{*}{$u$}}  & TMECor   &{\bf-0.019}&{\bf-0.018}&{\bf-0.016}&{\bf-0.015}&{\bf-0.014}\\
      \multicolumn{1}{c|}{}& TME  &-0.066&-0.044&-0.068&-0.050&-0.055\\
      \multicolumn{1}{c|}{}& Gap  &71\%&59\%&77\%&71\%&74\%\\
      \hline
   \end{tabular}
       \caption{The runtimes $t$ and the team's utilities $u$ for computed TMEs and TMEsCor solutions. We calculate the gap as the relative distance between the team's utility ($u_{Cor}$)  in a TMECor  and the one ($u_{TME}$) in a TME, i.e., $\frac{|u_{TME}-u_{Cor}|}{|u_{TME}|}\times 100\%$. Greater gap indicates that the team will lose more if it opts for the TME strategy. 
       }\label{TMECorResultsvsTME}
\end{table}

\section{Experimental Evaluation}

Finally, we demonstrate the performance of our CMB algorithm. We compare CMB to the previous state-of-the-art algorithms: (i) the original CG algorithm in Celli and Gatti \citeyearpar{celli2018computational} (referred to as C18); (ii) the CG with the BRO proposed by Farina et al. \citeyearpar{farina2018ex} (referred to as F18), and  (iii) the CMB with the associated constraints generation algorithm of Zhang and An \citeyearpar{zhang2020computing} (referred to as CMBZ20). We use two standard EFG domains for evaluating the algorithms: (i) the Kuhn poker, and (ii) the Leduc Hold'em poker.  Formal definitions of the domains can be found in Appendix~\ref{pokergamerules}. All       players have the same number of sequences in these games, and we use   $|\Sigma_i|$ to represent this number of sequences. We denote an $n$-player Kuhn instance with $r$ ranks (i.e., $r$ cards) as $n$K$r$, and refer to an $n$-player Leduc Hold'em instance with $r$ ranks (i.e., $3r$ cards) as $n$L$r$.  Without a loss of generality, the last player assumes the role of the adversary. All (MI)LPs are solved by CPLEX 12.9 on a machine with 6-core 3.6GHz CPU and 32GB of memory.

{\bf Runtimes.} We present the runtime results in Table~\ref{kuhnLeducpokerTMECorResults}. We omit the runtimes of CMBZ20 because it performs similarly to CMB, but evaluate their differences further in larger games and report the results in Table~\ref{GACTMECorResults}. For assessing ablations, we compare CMB to its four variant: (i)   CMB without associated constraints (referred to as CMB/A);  (ii)   CMB that uses continuous variables to represent reaching probabilities for terminal nodes without using our ART and BRO (referred to as CMB/ART); (iii)   CMB with BRO generating  joint pure normal-form strategies instead of hybrid-form strategies (referred to as CMB/H), and (iv) CMB/ART/H---a combination of (ii) and (iii). The results clearly show that CMB is several orders of magnitude faster than the reference algorithms. Moreover, CMB also outperforms all ablation algorithms, which strongly suggests that each component of CMB significantly boosts its performance.

{\bf Convergence and Supports.} In Table~\ref{TMECorResultsSupportSizeIteration}, we report the number of iterations CMB needs to converge to an exact TMECor, together with the team's equilibrium strategy support size. The number of iterations is significantly smaller than the theoretical upper bound $2^{33}\times 33$ derived in Proposition~\ref{upperboundofiteration}\footnote{$|\Sigma_i| = 33$ in 3K4, whereas $|\Pi_i|$ is significantly greater.}. The support sets in TMECor also remain small.

{\bf Approximation.} In the next experiment, we evaluate CMB's ability to compute an $\epsilon\Delta_u$-TMEsCor, where $\Delta_u$ is the difference between the maximum and minimum achievable utility of the team. We compare CMB to Fictitious Team Play (FTP) \cite{farina2018ex}. We use the setting of FTP reported in Farina  et al. \citeyearpar{farina2018ex}, including their BRO's time limit of 15s. Note that their BRO cannot run on large games otherwise. The results in Table \ref{kuhnLeducpokerEpsilonTMECorResults} show that CMB runs significantly faster than FTP. For example, CMB is at least two orders of magnitude faster than FTP on large games with small $\epsilon$, e.g., $\epsilon=0.01$. Moreover, according to the results, it is almost impossible for FTP to converge to an  $\epsilon\Delta_u$-TMECor with an even smaller $\epsilon$ (e.g., $\epsilon=0.0001$), let alone an exact TMECor. For 3K4---the smallest game in our experiments---FTP is unable to converge to an  $\epsilon\Delta_u$-TMECor with $\epsilon=0.0004$ in 100 hours\footnote{$\epsilon$ reaches $0.0005$ in 2674s but then it fluctuates around $0.001$.}. In contrast, CMB computes an exact TMECor within 0.7s, as shown in Table \ref{kuhnLeducpokerTMECorResults}.

{\bf Comparison to TME.} The last experiment demonstrates the difference between TMECor and TME discussed in Section \ref{sectionTMECorRelatedwork}, both in runtime and the team's utility. Previous literature has shown that the team suffers large losses in utility when resorting to TME strategies instead of TMECor strategies in 3K3--3K7 \cite{farina2018ex}. In Table~\ref{TMECorResultsvsTME}, we report the results on larger 3K8--3K12. Because of the encountered difficulty to compute an exact TME, we use the state-of-the-art algorithm of Zhang and An \citeyearpar{zhang2020computing} to approximate the TME by computing an $\epsilon\Delta_u$-TME with $\epsilon=0.01$. The results show that approximating a TME takes significantly longer than computing a TMECor, while at the same time, the TME strategies are inferior to the TMECor strategies.

\section{Conclusion and Future Work}
We propose a new algorithm (CMB) for finding a TMECor in large zero-sum multiplayer EFGs. Our algorithm is based on a novel hybrid-form strategy representation of the team, which gives rise to a column-generation method with guaranteed convergence in finite time. The heart of the algorithm is a multilinear best-response oracle that can be queried efficiently using our associated representation technique. We show that our algorithm computes a TMECor significantly faster than previous state-of-the-art baselines. 
In the future, we can  explore the  tighter theoretical upper bound for the number of iterations because CMB requires very few iterations in experiments. 
Due to the difficulty of computing the best-response oracle, we would like to explore the possibility of improving our oracle by   reinforcement learning \cite{timbers2020approximate} or  regret minimization \cite{celli2019learning}.

\section*{Broader Impact}
Game-theoretic solutions, including our algorithm, have both descriptive and prescriptive applications in suitable competitive environments, including businesses, politics, or even gambling. Finding the equilibria helps to understand people's behavior when interacting in dynamic situations and makes it easier to construct effective decisions to optimize multiagent systems. For example, the manufacturers in competitive markets may find better pricing strategies if they consider the decision-making of their competitors. While abusing the theories, e.g., by gamblers in casinos, is also feasible, the same approach also allows for identifying strategic violators by predicting their behavior. A notable drawback of traditional solution concepts like TMECor is their dependence on involved players' rational behavior. In case we suspect them to behave irrationally, we have to extend our models.

\section*{Acknowledgement}
This research is supported by  the National Research Foundation, Singapore under its AI Singapore Programme (AISG Award No: AISG-RP-2019-0013), National Satellite of Excellence in Trustworthy Software Systems (Award No: NSOE-TSS2019-01), the SIMTech-NTU Joint Laboratory on Complex System, and NTU.

\bibliography{TmeEfg}

\newpage \
\newpage
\section*{Appendix}
\appendix
\section{Comparison to Existing Work}\label{RelatedWrokDetail}
 A TMECor can be computed via normal-form strategies for all players.
However, in EFGs, each player's normal-form strategy space is exponential in the size of the game tree. To  compute TMEsCor more efficiently,  Celli and Gatti  \cite{celli2018computational} propose a hybrid representation of strategy spaces, in which the team plays joint normal-form strategies while the adversary plays sequence-form strategies.
Farina  et al. \citeyearpar{farina2018ex} then develop a realization-form strategy representation for all players focusing on the   probability distribution on terminal nodes of the game tree, and then they use it to derive an auxiliary game representing   the original game through a set of subtrees. Each  subtree is a two-player game between one team member and the adversary, in which the two players use realization-form strategies\footnote{Although the team member's realization-form strategy can be induced by a behavioral strategy in the subtree, the theoretical result proving the equivalence between the auxiliary game and the  original game is based solely on the realization-form strategy. Moreover, the authors' algorithm Fictitious Team-Play (FTP) also  assumes that the team member uses  a realization-form strategy that needs to be transformed into a normal-form strategy to be executed from the root. And although the team member's strategy is represented as a sequence-form strategy in the MILP formulation of the authors' BRO, it still has to be transformed into a realization-form strategy in the FTP.}, 
 while a joint pure normal-form strategy of the remaining team members is fixed.  
 Although the auxiliary game uses a hybrid-form strategy for the team, it is not a strategy representation on its own. Most importantly, it is based on the realization-form strategy that  is not executable from the root of the game tree. To make  a realization-form strategy  executable, the authors need to reconstruct an equivalent normal-form strategy. The problem is that the reconstruction is difficult in large games. The only existing reconstruction algorithm (Algorithm 2 in  Celli  et al. \citeyearpar{celli2019learning}) runs in time $O(|L|^2)$, and thus does not scale to large games (see Table \ref{kuhnLeducpokerTMECorResults} for the large number of terminal nodes). 
 To avoid such a cumbersome reconstruction algorithm, we propose a new  strategy representation for the team, which can be executed from the root immediately. That is, in our hybrid-form strategies,   one team member  uses sequence-form strategies (instead of realization-form strategies) while other team members use pure normal-form strategies.
 However, it is not clear whether our strategy representation with an infinite number of pure strategies (due to the sequence-form strategies of one member) can be used to compute exact TMEsCor (see the explanation in Section \ref{sectionCMB}).\footnote{Farina  et al. \citeyearpar{farina2018ex} do not discuss this problem for their continuous   realization-form strategy space  because they aim to compute approximate TMEsCor.} 
 Fortunately,   we theoretically show that our CG (i.e., the CMB algorithm) guarantees a convergence to a TMECor within a finite number of iterations despite our infinite strategy space. 
 Therefore, although similar structures to our hybrid-form strategy exist in the literature, we are the first to use it as the team's strategy representation without depending on the inexecutable realization-form strategies. Moreover, we theoretically show how to use it to compute exact TMEsCor.

Another key component of CG algorithms for computing TMEsCor is the BRO. The first BRO was proposed by Celli and Gatti  \cite{celli2018computational}. It expresses whether or not a leaf is reached by a pure joint normal-form strategy of all team members (see details in Appendix \ref{RelatedWorkTMECorPartCelil}) using  $|L|$ integer variables. Therefore, it is able to solve only small games.
 Farina et al. \citeyearpar{farina2018ex} develop a faster BRO that decreases a number of integer variables by expressing the utility of the joint pure normal-form strategy  of  all team members (except one) given the realization-form strategy of   other players. However, although the number of integer variables is significantly reduced to the number of sequences of  all team members (except one), this BRO is still inefficient in large games with a large number of sequences. To speed up the computation, one possible approach is to add associated constraints \cite{zhang2020computing} (see   Appendix \ref{RelatedWorkTMECorPartZhang})  to reduce the feasible solution space of the MILP representing the BRO.\footnote{Adding associated constraints, to our best knowledge, is the only known effective method to do that.} Unfortunately, the BRO of Farina et al. \citeyearpar{farina2018ex} is not compatible with associated constraints (see details in Appendix \ref{RelatedWorkTMECorPartFarina}).  Therefore, it is not possible to directly add associated constraints  to this BRO.
Overall, the main difference between  different BRO approaches is in how they represent the team's reaching probabilities for   terminal nodes with   utilities, which affects the number of integer variables and the compatibility with associated constraints.
Our formulation is the first BRO that reduces the number of integer variables and simultaneously is compatible with associated constraints (see Table   \ref{ComparingwithExiApp}).  

\begin{table*}[t]
\centering
\begin{tabular}{|c|c|c|c|c|}
  \cline{2-5}
\multicolumn{1}{c|}{ }       & \multicolumn{1}{c|}{\multirow{2}{*}{New representation}}     &  \multicolumn{3}{c|}{ BRO}     \\\cline{3-5}
\multicolumn{1}{c|}{ }   &   \multicolumn{1}{c|}{}   &  Integer variables &Compatible      & Reduce space   \\\hline
 Celli and Gatti  \citeyearpar{celli2018computational} & Yes   &$|L|$&Yes&No  \\\hline
  Farina  et al. \citeyearpar{farina2018ex} &Yes   &$\sum_{i\in T\setminus\{1\}}|\Sigma_i|$&No &No \\\hline
  Our approach & Yes &$\sum_{i\in T\setminus\{1\}}|\Sigma_i|$&Yes &Yes\\
  \hline
\end{tabular}
\caption{Comparison of existing approaches to compute TMEsCor in terms of the strategy representation, the number of integer variables ($|L|$ is significantly larger than $\sum_{i\in T\setminus\{1\}}|\Sigma_i|$ in EFGs) in BRO, the compatibility of BRO with associated constraints, and reducing the feasible solution space of BRO.}\label{ComparingwithExiApp}
\end{table*}

Associated constraints are used to reduce the feasible solution space of an MILP for computing a TME in Zhang and An \citeyearpar{zhang2020computing}. These associated constraints are generated by a global optimization technique, which is based on the recursive McCormick relaxation (see details in Appendix \ref{RelatedWorkTMECorPartMcCormick}) \cite{mccormick1976computability,ryoo2001analysis}. In our case, we can use the recursive McCormick relaxation to exactly transform the multilinear BRO into an MILP. However, we then need to recursively generate associated constraints on the path from the terminal nodes to the root, which also requires introducing new terms for sequences reaching nonterminal nodes frequently, similarly to the algorithm in Zhang and An \citeyearpar{zhang2020computing}. When the number of team players is large, the number of variables in each multilinear term increases, which results in many more performed recursive operations. This recursive generation of associated constraints undermines the effort to improve the speed   by reducing the feasible solution space when the number of team players is large. We hence develop the global optimization technique ART, which does not depend on the recursive method and generates associated constraints more efficiently. 

In the concurrent work \cite{farina2020faster}, the important step of their algorithm is to add the “probability mass conservation” constraints for the polytope in their Definition 1, which are equivalent to the associated constraints of our algorithm  in three-player games. Therefore, we expect both algorithms to perform similarly in three-player games. However, their algorithm   can only handle three-player games, while our algorithm can compute solutions even for games with four or more players by addressing additional complicated issues. More precisely, in our work we address the problem of how to efficiently generate associated constraints for games with four or more players, which is not addressed in their work. In addition, we prove the theoretical guarantee for the column generation algorithm, but they do not have such a theoretical result.

\subsection{Reaching Probability for Each Leaf   Represented by a Binary Integer Variable \cite{celli2018computational} }\label{RelatedWorkTMECorPartCelil}
The first BRO proposed for computing TMEsCor \cite{celli2018computational}  expresses whether or not a leaf is reachable by a pure joint normal-form strategy (represented by the   sequence-form strategy) of all team members.  The corresponding MILP (all payoffs of the team are assumed to be positive) is: \begin{align*} \max_{\times_{i\in T}r_i,y} & \sum_{l\in L} u_{T}(l)c(l)r_n(\text{seq}_n (l) ) y(l)  \\
&y(l)\leq r_i(\text{seq}_i(l) )\quad \forall i\in T, l\in L\\
&y(l)\in\{0,1\}\\
& \text{Eqs.}(\ref{sqconstraint1ad})-(\ref{sqconstraint2ad})\quad \forall i\in T,\label{TMECorBROcon1}
\end{align*}
where $u_T(l)$ is the utility of the team when terminal node $l$ is reached with probability $c(l)$ due to chance nodes, $y(l)\in\{0,1\}$  reflects whether or not a leaf is reached by a pure joint normal-form plan, and $r_n(\text{seq}_n(l) )$ is the known adversary's strategy.

We can see that the number of integer variables is equal to $|L|$, which is large even for games of moderate size. The large number of integer variables makes the BRO inefficient and hence slows down the CG algorithm for computing TMEsCor. This approach  can compute   TMEsCor only for very small three-player Kuhn poker games.

\subsection{Associated Constraints \cite{zhang2020computing}}\label{RelatedWorkTMECorPartZhang}
Adding associated constraints, to our best knowledge, is the only known effective method for reducing the feasible solution space of the MILP representing the BRO.\footnote{More precisely, in our case, the associated constraints reduce the feasible solution space when the MILP is solved by the branch-and-bound approach with the LP relaxation    \cite{morrison2016branch}. For simplicity, we say that we reduce the feasible space of an MILP.} Associated constraints \cite{zhang2020computing} enforce the equivalence relations between multilinear terms in accordance with   the network-flow constraints for sequence-form strategies ($\text{Eqs.}(\ref{sqconstraint1ad})-(\ref{sqconstraint2ad})$). For example, suppose we have the following multilinear terms: \begin{align*}&w=r_1(\sigma_1)w'\\ &w_1=r_1(\sigma_1a)w'\\ & w_2=r_1(\sigma_1b)w',\end{align*} and the following constraint of sequence-form strategies:  \begin{align*}r_1(\sigma_1)= r_1(\sigma_1a)+ r_1(\sigma_1b).\end{align*} Then, we can generate the associated constraint   \begin{align*}w=w_1+w_2.\end{align*}
A more concrete example is in Appendix \ref{Example4pKuhnInf}.
 \subsection{Incompatibility with Associated Constraints  \cite{farina2018ex}}\label{RelatedWorkTMECorPartFarina}
Farina  et al. \citeyearpar{farina2018ex} develop a faster BRO that expresses the utility of the joint pure normal-form strategy (represented by the joint sequence-form strategy) of  all team members (except one), given the realization-form strategy of other players. The algorithm of the authors defines a term for each joint sequence of the remaining team members. For example, in the game with three players, the utility is represented by a variable  $w(\sigma_2)$ for sequence $\sigma_2$, and then the corresponding MILP \cite{farina2018ex}, assuming all payoffs of the team are positive, is
 \begin{align*}
 \max_{w,r_1,r_2} &\sum_{\sigma_2\in \Sigma_2}w(\sigma_2)\\
 &w(\sigma_2)\leq u(\sigma_2) \quad \forall \sigma_2\in \Sigma_2\\
 &w(\sigma_2)\leq M r_2(\sigma_2)\quad \sigma_2\in \Sigma_2\\
 & \text{Eqs.}(\ref{sqconstraint1ad})-(\ref{sqconstraint2ad})\quad \forall i\in T\\
 &  r_2(\sigma_2)\in\{0,1\} \quad \forall  \sigma_2\in \Sigma_2\\
&  r_{1}(\sigma_{1})\geq 0 \quad  \forall \sigma_{1}\in \Sigma_{1},
 \end{align*}
where $M$ is the maximum payoff of the team,  $u(\sigma_2)=\sum_{\sigma_1\in \Sigma_1}u^{\omega_n}_{\sigma_1,\sigma_2}r_1(\sigma_1)$,  $w(\sigma_2)$ can be rewritten as a multilinear term according to its definition, i.e., $w(\sigma_2)=r_2(\sigma_2)u(\sigma_2)$, $r_1(\sigma_1)\in[0,1]$ (representing the realization-form strategy of player 1) and $r_2(\sigma_2)\in\{0,1\}$ are the probabilities when the corresponding sequence is played under the sequence-form strategy, $u^{\omega_n}_{\sigma_1,\sigma_2}$ is the known utility given the adversary realization-form strategy $\omega_n$ and sequences $\sigma_1$ and $\sigma_2$ of the team. For games with more than three players, sequence $\sigma_2$ is replaced by a joint sequence  $\times_{i\in T\setminus\{1\}}\sigma_i$. Suppose we have the following two additional bilinear terms:
  \begin{align*}
  &w(\sigma_2a)=r_2(\sigma_2a)u(\sigma_2a)\\ &w(\sigma_2b)=r_2(\sigma_2b)u(\sigma_2b),\end{align*} and the following  constraint of the sequence-form strategy: \begin{align*}r_2(\sigma_2)=r_2(\sigma_2a)+r_2(\sigma_2b).\end{align*}
We are not able to generate the associated constraint $w(\sigma_2)=w(\sigma_2a)+w(\sigma_2b)$ because the utilities  (i.e., $u(\sigma_2), u(\sigma_2a)$, and $u(\sigma_2b)$) in these terms usually differ. Consequently, the BRO of Farina  et al. \citeyearpar{farina2018ex} is not compatible with associated constraints. The reason is that when the utility is included in multilinear terms,   the remaining parts (e.g.,  $u(\sigma_2), u(\sigma_2a)$, and $u(\sigma_2b)$)) in different terms are usually not the same if one variable (e.g., $r_2(\sigma_2)$, $r_2(\sigma_2a)$, or $r_2(\sigma_2b)$) in each term is fixed. We hence can not directly generate associated constraints to reduce the feasible solution space of this BRO. In contrast, the multilinear terms (i.e., $\prod_{i\in T}r_i(\text{seq}_i(l))$) in our BRO formulation  (problem \eqref{TMECorBRO}), only represent the   team's reaching probabilities for terminal nodes, and are thus compatible with associated constraints.

Another drawback of this approach is that player 1's  sequence-form strategy computed by the BRO has to be transformed into an equivalent realization-form strategy in their FTP \cite{farina2018ex}.
Moreover, one additional advantage of our ART is that unlike the presented existing approaches, our BRO does not require that all payoffs of the team are positive.

\subsection{McCormick Relaxation}\label{RelatedWorkTMECorPartMcCormick}
The McCormick relaxation \cite{mccormick1976computability} is a technique for approximating a bilinear term $w=y_iy_j$ ($y_i,y_j\in [0,1]$) by loose inequalities $\max\{0,y_i+y_j-1\}\leq w\leq \min\{y_i,y_j\}$. The recursive version \cite{ryoo2001analysis} of this approach uses a new variable to replace each bilinear part in the multilinear term recursively. For example, the multilinear term $w'=y_iy_jy_h$ is recursively defined as two bilinear terms $w'=wy_h$ and $w=y_iy_j$, and then the McCormick inequalities are generated for each bilinear term separately. A new technique inspired by this approach \cite{zhang2020computing} adds integer variables to increase precision by generating tighter bounds.

\section{Proofs}\label{TMECorProofs}
\setcounter{Proposition}{0}
\setcounter{Lemma}{0}
\setcounter{Theorem}{0}
\setcounter{Corollary}{0}
\begin{Lemma}
For every normal-form strategy $x_T\in\Delta(\Pi_T)$ there exists a realization-equivalent hybrid-form strategy $\overline{x}_T\in\Delta(\mathcal{F}_T)$ such that $|S_{x_T}| = |S_{\overline{x}_T}|$.
\end{Lemma}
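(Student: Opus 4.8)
The plan is to build $\overline{x}_T$ directly from $x_T$ by converting, one strategy at a time, each pure joint normal-form strategy in the support of $x_T$ into a pure hybrid-form strategy while keeping the probabilities untouched. The only component that changes is player~1's part: I would replace his pure (reduced) normal-form strategy $\pi_1$ by the \emph{pure} sequence-form strategy $r_1\in\overline{\mathcal{R}}_1\subseteq\mathcal{R}_1$ that is realization-equivalent to it, which exists and is uniquely determined by the standard correspondence between reduced normal-form and sequence-form strategies \cite{von1996efficient}. The other team members' components $\pi_{T\setminus\{1\}}$ are carried over unchanged, so the resulting $f_T=(r_1,\pi_{T\setminus\{1\}})$ is a legitimate element of $\mathcal{F}_T$.

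Concretely, for every $\pi_T=(\pi_1,\pi_{T\setminus\{1\}})\in S_{x_T}$ I would set $f_T=(r_1,\pi_{T\setminus\{1\}})$ with $r_1\sim\pi_1$ pure, and define $\overline{x}_T(f_T)=x_T(\pi_T)$, with $\overline{x}_T=0$ elsewhere. Two things then need to be checked. \emph{Realization equivalence}: since $r_1\sim\pi_1$ as pure strategies and the remaining components are identical, each $f_T$ reaches every node with exactly the same probability as the corresponding $\pi_T$ under any adversary strategy; because the reaching probabilities and utilities are linear in the team's mixing weights (Eqs.~(\ref{U_T_X_sigma})--(\ref{U_T_X_r}) and their hybrid-form analogues), the mixture $\overline{x}_T$ reaches every node with the same probability as $x_T$, i.e., $\overline{x}_T\sim x_T$. \emph{Support size}: $|S_{\overline{x}_T}|=|S_{x_T}|$ will follow once I argue that the map $\pi_T\mapsto f_T$ is injective on $S_{x_T}$, since the assigned weights are positive exactly on the images.

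The crux, and the main obstacle, is this injectivity claim, which is where the restriction to \emph{reduced} normal-form strategies is essential. If $f_T=f_T'$ for $\pi_T,\pi_T'\in S_{x_T}$, then $\pi_{T\setminus\{1\}}=\pi_{T\setminus\{1\}}'$ and $r_1=r_1'$; I would then invoke that the assignment $\pi_1\mapsto r_1$ is a \emph{bijection} between player~1's reduced pure normal-form strategies and his pure sequence-form strategies---distinct reduced pure normal-form strategies induce distinct realization plans---to conclude $\pi_1=\pi_1'$ and hence $\pi_T=\pi_T'$. This is precisely the point emphasized before the lemma: although one could merge several normal-form strategies differing only in player~1's component into a single hybrid-form strategy (which would \emph{shrink} the support), using pure sequence-form strategies for $r_1$ lets us keep them separate and thus attain exact equality rather than only the inequality $|S_{\overline{x}_T}|\le|S_{x_T}|$. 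I expect the remaining verifications (flow-constraint feasibility of $r_1$ and membership $f_T\in\mathcal{F}_T$) to be routine.
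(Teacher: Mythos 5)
Your proposal is correct and follows essentially the same route as the paper's proof: map each supported $\pi_T=(\pi_1,\pi_{T\setminus\{1\}})$ to $(r_1,\pi_{T\setminus\{1\}})$ with $r_1\in\overline{\mathcal{R}}_1$ the pure sequence-form strategy realization-equivalent to $\pi_1$, carry over the weights, and use linearity to get realization equivalence of the mixtures. Your explicit injectivity argument for the support-size equality is a welcome elaboration of what the paper leaves implicit in the phrase ``uniquely represented.''
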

\begin{proof}
As we mentioned in Section \ref{TMECorPre}, any pure normal-form strategy $\pi_{1}\in\Pi_{1}$ can be uniquely  represented by a   pure sequence-form strategy $r_{1}\in \overline{\mathcal{R}}_1$, i.e., any pure normal-form strategy $\pi_{1}\in\Pi_{1}$ is realization-equivalent to a pure sequence-form strategy $r_{1}\in \overline{\mathcal{R}}_1$. Obviously, if $\pi_{1} \sim r_{1},$ \begin{align*}(\pi_{1},\pi_{T\setminus\{1\}})\sim (r_{1},\pi_{T\setminus\{1\}}),    \end{align*} where $\pi_1\in \Pi_1$ and $r_{1}\in \overline{\mathcal{R}}_1$. For any mixed normal-form strategy $ x_T \in \Delta(\Pi_T)$, we define a mixed hybrid-form strategy $\overline{x}_T\in \Delta(\mathcal{F}_T)$ such that: \begin{align*}&\overline{x}_T(r_{1},\pi_{T\setminus\{1\}})=x_T(\pi_{1},\pi_{T\setminus\{1\}}) \\ &\quad\quad \quad\quad\quad\quad
\forall (\pi_{1},\pi_{T\setminus\{1\}})\in \Pi_T, \pi_{1} \sim r_{1}, r_1\in \overline{\mathcal{R}}_1,\end{align*} and $\overline{x}_T(f_T) = 0$ for other strategies $f_T \in \mathcal{F}_T\setminus \{(r_{1},\pi_{T\setminus\{1\}})\mid r_1\in \overline{\mathcal{R}}_1\}$.
 Given any adversary strategy $r_n$ (or $\sigma_n$) and any node  $h\in H\cup L$, let $P_{\pi_T}(h)$ be $\pi_T$'s reaching probability for $h$. Then $x_T$'s reaching probability for $h$ is: \begin{align*} &\sum_{(\pi_{1},\pi_{T\setminus\{1\}})\in \Pi_T}\!\!\!\!x_T(\pi_{1},\pi_{T\setminus\{1\}})P_{(\pi_{1},\pi_{T\setminus\{1\}})}(h)\\
= &\sum_{(r_{1},\pi_{T\setminus\{1\}})\in \mathcal{F}_T,r_1\sim\pi_1  }\!\!\!\!\overline{x}_T(r_{1},\pi_{T\setminus\{1\}})
  P_{(\pi_{1},\pi_{T\setminus\{1\}})}(h),\end{align*} which is $\overline{x}_T$'s reaching probability for $h$.  Hence, $x_T \sim \overline{x}_T$ and $|S_{x_T}| = |S_{\overline{x}_T}|$.
\end{proof}

\begin{Lemma}
For every hybrid-form strategy $\overline{x}_T\in\Delta(\mathcal{F}_T)$ there exists a realization-equivalent normal-form strategy $x_T\in\Delta(\Pi_T)$.
\end{Lemma}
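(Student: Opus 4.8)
The plan is to invert the construction used in Lemma~\ref{realizationEQLemma}, moving from a mixed sequence-form strategy of player 1 back to a mixed normal-form strategy, and then to lift this single-player equivalence to the whole team by linearity of reaching probabilities in the mixing weights. First I would recall the single-player fact quoted in Section~\ref{TMECorPre}: every mixed sequence-form strategy $r_1\in\mathcal{R}_1$ is realization-equivalent to some mixed normal-form strategy $x_1\in\Delta(\Pi_1)$ \cite{von1996efficient}. Applying this to each pure hybrid-form strategy $f_T=(r_1,\pi_{T\setminus\{1\}})$ in the support $S_{\overline{x}_T}$, I obtain a decomposition $x_1^{f_T}\in\Delta(\Pi_1)$ with $x_1^{f_T}\sim r_1$.

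Next I would define, for each such $f_T$, the joint normal-form strategy $x_T^{f_T}\in\Delta(\Pi_T)$ that places weight $x_1^{f_T}(\pi_1)$ on $(\pi_1,\pi_{T\setminus\{1\}})$ for every $\pi_1\in\Pi_1$ and zero elsewhere. I would then argue that $f_T\sim x_T^{f_T}$: fixing the pure strategies $\pi_{T\setminus\{1\}}$ of the other members, the reaching probability of any node $h$ under either representation factorizes as the product of player 1's reaching probability, the $\{0,1\}$ contributions of the remaining members, and the adversary's and chance contributions; since $x_1^{f_T}\sim r_1$ preserves player 1's reaching probabilities, the two joint reaching probabilities coincide for every strategy $r_n$ (or $\sigma_n$) of the adversary.

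Finally, I would assemble the global distribution by setting $x_T=\sum_{f_T\in S_{\overline{x}_T}}\overline{x}_T(f_T)\,x_T^{f_T}$, which is a convex combination of elements of $\Delta(\Pi_T)$ and hence again lies in $\Delta(\Pi_T)$ (nonnegativity is immediate, and the weights sum to $\sum_{f_T}\overline{x}_T(f_T)=1$). Because the reaching probability $P_{x_T}(h)$ of any node is linear in the mixture weights, it equals $\sum_{f_T}\overline{x}_T(f_T)P_{x_T^{f_T}}(h)=\sum_{f_T}\overline{x}_T(f_T)P_{f_T}(h)=P_{\overline{x}_T}(h)$, where $P$ denotes reaching probability as in the proof of Lemma~\ref{realizationEQLemma}; this establishes $x_T\sim\overline{x}_T$.

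The main obstacle I anticipate is the second step: carefully justifying that the single-player realization equivalence $x_1^{f_T}\sim r_1$ lifts to the joint strategy while the other members' pure strategies and the adversary's strategy are held fixed. This reduces to checking that a node's reaching probability factorizes across players, so that replacing player 1's mixed sequence-form component by its realization-equivalent normal-form mixture leaves every joint reaching probability unchanged; the remaining bookkeeping is routine. Note that, unlike Lemma~\ref{realizationEQLemma}, no control over $|S_{x_T}|$ is claimed here, so a single $r_1$ may expand into many pure normal-form strategies of player 1 without affecting the argument.
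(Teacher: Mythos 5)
Your proposal is correct and follows essentially the same route as the paper's proof: invoke the single-player realization equivalence between $r_1$ and some $x_1\in\Delta(\Pi_1)$, lift it to the joint strategy with the other members' pure strategies fixed via the factorization $P_{(r_1,\pi_{T\setminus\{1\}})}(h)=\sum_{\pi_1\in\Pi_1}x_1(\pi_1)P_{(\pi_1,\pi_{T\setminus\{1\}})}(h)$, and then define $x_T$ as the $\overline{x}_T$-weighted mixture, which is exactly the paper's formula $x_T(\pi_1,\pi_{T\setminus\{1\}})=\sum_{(r_1,\pi_{T\setminus\{1\}})}\overline{x}_T(r_1,\pi_{T\setminus\{1\}})x_1(\pi_1)$. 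The linearity argument closing the proof matches the paper's chain of equalities for the reaching probabilities.
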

\begin{proof}
As we mentioned in Section \ref{TMECorPre},   any sequence-form strategy $r_1\in  \mathcal{R}_1$ is realization-equivalent to a mixed  normal-form strategy $x_{1}\in \Delta(\Pi_{1})$. Obviously,  if $ x_{1} \sim r_{1}$, \begin{align*}(x_1,\pi_{T\setminus\{1\}})\sim (r_1,\pi_{T\setminus\{1\}}),  \end{align*} where $x_{1}\in \Delta(\Pi_{1})$ and $r_1\in  \mathcal{R}_1$.
For any   mixed hybrid-form strategy $\overline{x}_T\in \Delta(\mathcal{F}_T)$, we define a mixed normal-form strategy $ x_T \in \Delta(\Pi_T)$ such that: \begin{align*}&x_T(\pi_{1},\pi_{T\setminus\{1\}})= \!\!\!\sum_{(r_{1},\pi_{T\setminus\{1\}})\in \mathcal{F}_T,x_1\sim r_1}\!\!\!\overline{x}_T(r_{1},\pi_{T\setminus\{1\}})x_{1}(\pi_{1})
 \\ &\quad\quad\quad\quad\quad\quad\quad\quad\quad\quad\quad\quad\quad\quad\quad\forall (\pi_{1},\pi_{T\setminus\{1\}}) \in \Pi_T.\end{align*}
Given any adversary strategy $r_n$ (or $\sigma_n$) and any node  $h\in H\cup L$, let $P_{f_T}(h)$ be $f_T$'s reaching probability for $h$. We have: if $x_{1} \sim r_{1},$   \begin{align*}P_{(r_{1},\pi_{T\setminus\{1\}})}(h) =\sum_{\pi_1\in \Pi_1}x_1(\pi_1)P_{(\pi_{1},\pi_{T\setminus\{1\}})}(h)  .\end{align*}
 Then $\overline{x}_T$'s reaching probability for $h$ is: \begin{align*}&\sum_{(r_{1},\pi_{T\setminus\{1\}})\in \mathcal{F}_T}\overline{x}_T(r_{1},\pi_{T\setminus\{1\}})P_{(r_{1},\pi_{T\setminus\{1\}})}(h)
 \\=& \sum_{(r_{1},\pi_{T\setminus\{1\}})\in \mathcal{F}_T,r_1\sim x_1}\!\!\!\!\!\!\!\!\!\!\!\!\!\!\!\!\!\! \overline{x}_T(r_{1},\pi_{T\setminus\{1\}})
 \sum_{\pi_1\in \Pi_1} \!\!  x_1(\pi_1)P_{(\pi_{1},\pi_{T\setminus\{1\}})}(h)
\\=&\!\!\!\!\!\!\!\sum_{(\pi_{1},\pi_{T\setminus\{1\}})\in \Pi_T } \sum_{(r_{1},\pi_{T\setminus\{1\}})\in \mathcal{F}_T,r_1\sim x_1} \!\!\!\!\!\!\!\!\!\!\!\!\!\!\!\!\!\!\!\!  \overline{x}_T (r_{1},\pi_{T\setminus\{1\}})x_1(\pi_1)
  P_{(\pi_{1},\pi_{T\setminus\{1\}})}(h)
  \\
  =&\sum_{(\pi_{1},\pi_{T\setminus\{1\}})\in \Pi_T } x_T (\pi_{1},\pi_{T\setminus\{1\}})
  P_{(\pi_{1},\pi_{T\setminus\{1\}})}(h),\end{align*} which is $ x_T$'s reaching probability for $h$.
 Hence, $x_T \sim \overline{x}_T$. 
\end{proof}

\begin{Theorem}
For every normal-form strategy $x_T\in \Delta(\Pi_T)$ there exists a hybrid-form strategy $\overline{x}_T\in\Delta(\mathcal{F}_T)$ such that $U_T(x_T,\sigma_n)=U_T(\overline{x}_T,\sigma_n)$, $\forall \sigma_n\in\Sigma_n$, and vice versa.
\end{Theorem}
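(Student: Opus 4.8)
The plan is to reduce the statement to a fact about per-leaf reaching probabilities and then invoke the two preceding lemmas together with the definition of realization equivalence. First I would note that Lemma~\ref{realizationEQLemma} already produces, for any normal-form $x_T$, a realization-equivalent hybrid-form $\overline{x}_T$, while Lemma~\ref{realizationEQ} supplies the converse direction. Hence the only substantive work left is to show that two realization-equivalent team strategies yield identical values of $U_T(\cdot,\sigma_n)$ for every adversary sequence $\sigma_n\in\Sigma_n$.

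The key step is to rewrite the extended utility as a sum over terminal nodes rather than over the exponentially many joint pure strategies. For a fixed $\sigma_n$, observe that $l\in L_{\pi_T,\sigma_n}$ holds exactly when the team plays all of its own actions on the path to $l$ and $\text{seq}_n(l)=\sigma_n$. Writing $P^{T}_{x_T}(l)$ for the probability that the team's strategy reaches $l$ using only the team's own actions, I would interchange the two summations to obtain $U_T(x_T,\sigma_n)=\sum_{l:\text{seq}_n(l)=\sigma_n} u_T(l)c(l)\,P^{T}_{x_T}(l)$, and the analogous identity for the hybrid-form utility, where the factor $r_{1}(\text{seq}_1(l))$ together with the indicator that $\pi_{T\setminus\{1\}}$ reaches $l$ is precisely $P^{T}_{\overline{x}_T}(l)$. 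In both representations the utility therefore depends on the team strategy only through these per-leaf quantities.

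Next I would connect $P^{T}$ to realization equivalence. The full reaching probability of $l$ under a team strategy and an adversary sequence-form strategy $r_n$ factors as $c(l)\,r_n(\text{seq}_n(l))\,P^{T}(l)$, since, by perfect recall, chance and the adversary each contribute the same multiplicative factor independently of the team's choice. Because the two strategies are realization-equivalent, these full reaching probabilities agree for every node and every adversary strategy; choosing an $r_n$ that assigns positive probability to $\text{seq}_n(l)$ then forces $P^{T}_{x_T}(l)=P^{T}_{\overline{x}_T}(l)$ for every terminal node $l$. Substituting this equality into the node-indexed expressions of the previous paragraph gives $U_T(x_T,\sigma_n)=U_T(\overline{x}_T,\sigma_n)$ for all $\sigma_n$, which is exactly the claim; the reverse direction is identical after exchanging the roles of the two lemmas.

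I expect the main obstacle to be the bookkeeping in the second step: justifying the interchange of summations and, in particular, verifying that the hybrid-form weighting $r_{1}(\text{seq}_1(l))$ combines with the pure normal-form components $\pi_{T\setminus\{1\}}$ to reproduce exactly the team-only reaching probability $P^{T}_{\overline{x}_T}(l)$. Once the utility is expressed purely in terms of $P^{T}(l)$, the remainder follows mechanically from the definition of realization equivalence, so the entire conceptual content lies in recognizing that $U_T(\cdot,\sigma_n)$ sees the team strategy only through its induced reaching probabilities at the leaves.
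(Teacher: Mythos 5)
Your proposal is correct and follows essentially the same route as the paper: invoke Lemmas~\ref{realizationEQLemma} and~\ref{realizationEQ} to obtain a realization-equivalent strategy in the other representation, then rewrite $U_T(\cdot,\sigma_n)$ as a sum over terminal nodes weighted by the team's reaching probabilities, which coincide by realization equivalence. The paper's proof is simply terser, directly introducing the common per-leaf probability $p_{\sigma_n}(l)$ and swapping the order of summation, whereas you spell out the factorization and the extraction of the team-only factor; the content is the same.
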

\begin{proof}
By Lemma \ref{realizationEQLemma}, $\forall x_T\in \Delta(\Pi_T)$, $\exists \overline{x}_T\in\Delta(\mathcal{F}_T)$ such that $x_T\sim\overline{x}_T$. For each  $\sigma_n\in\Sigma_n$, let $p_{\sigma_n}(l)$ be the probability reaching terminal node $l$ by $x_T$ or $\overline{x}_T$.
Then,
 \begin{align*}&U_T(\overline{x}_T,\sigma_n)\\=&\sum_{f_T\in \mathcal{F}_T}U_T(f_T,\sigma_n)\overline{x}_T(f_T)\\=& \sum_{l\in L}p_{\sigma_n}(l)u_T(l)c(l)\\=&\sum_{\pi_T\in \Pi_T}U_T(\pi_T,\sigma_n)x_T(\pi_T)\\=& U_T(x_T,\sigma_n).\end{align*} Similarly, by Lemma \ref{realizationEQ}, $\forall \overline{x}_T\!\!\in\!\! \Delta(\mathcal{F}_T)$, $\exists x_T\in\Delta(\Pi_T)$ such that $U_T(x_T,\sigma_n)\!\!=\!\!U_T(\overline{x}_T,\sigma_n)(\forall \sigma_n\in\Sigma_n)$.
\end{proof}

\begin{Proposition}
CMB converges in at most $2^{|\Pi_{1}|}\frac{|\Pi_T|}{|\Pi_1|}$ steps. 
\end{Proposition}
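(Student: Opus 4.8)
The plan is to show that although $\mathcal{F}_T$ is infinite, every column the oracle can ever return falls into one of finitely many equivalence classes, and that no class is used twice before termination. First I would attach to each pure hybrid-form strategy $f_T=(r_1,\pi_{T\setminus\{1\}})$ a \emph{type} $\tau(f_T)=(S,\pi_{T\setminus\{1\}})$, where $S\subseteq\Pi_1$ is the support of the mixed normal-form strategy $x_1\sim r_1$ guaranteed by the sequence-form/normal-form equivalence. Since $S$ is a subset of $\Pi_1$ and $\pi_{T\setminus\{1\}}$ ranges over $\Pi_{T\setminus\{1\}}$, the number of distinct types is at most $2^{|\Pi_1|}\,|\Pi_{T\setminus\{1\}}|=2^{|\Pi_1|}\frac{|\Pi_T|}{|\Pi_1|}$, using $|\Pi_T|=|\Pi_1|\,|\Pi_{T\setminus\{1\}}|$. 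It therefore suffices to prove that each iteration which does not trigger termination adds a column whose type has not appeared before.

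The core of the proof (and the main obstacle) is the following claim: if at some iteration the oracle returns a strategy whose type already occurs among the columns of $\mathcal{F}'_T$, then $\overline{v}=\underline{v}$ and CMB halts. To establish it I would exploit that $U_T(f_T,r_n)$ is linear in $r_1$ once $\pi_{T\setminus\{1\}}$ and the adversary's $r_n$ are fixed. Writing the oracle's output $f_T=(r_1,\pi_{T\setminus\{1\}})$ with $r_1=\sum_{\pi_1\in S}\lambda_{\pi_1}r_1^{\pi_1}$ as a strict convex combination of the pure sequence-form strategies $r_1^{\pi_1}\in\overline{\mathcal{R}}_1$ indexed by $S$, linearity gives $\overline{v}=U_T(f_T,r_n)=\sum_{\pi_1\in S}\lambda_{\pi_1}U_T((r_1^{\pi_1},\pi_{T\setminus\{1\}}),r_n)$. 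Because $\overline{v}$ is the best-response value over all of $\mathcal{F}_T$, each pure component obeys $U_T((r_1^{\pi_1},\pi_{T\setminus\{1\}}),r_n)\leq\overline{v}$; as the weights $\lambda_{\pi_1}$ are strictly positive and their weighted average equals $\overline{v}$, every pure component on the support $S$ must itself attain value exactly $\overline{v}$ against $r_n$.

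Now suppose the type $(S,\pi_{T\setminus\{1\}})$ is already realized by a column $f_T'\in\mathcal{F}'_T$; by definition $f_T'=(r_1',\pi_{T\setminus\{1\}})$ with $r_1'$ a mixture over exactly the same support $S$. Applying linearity once more, $U_T(f_T',r_n)$ is a convex combination of the same pure-component values, each equal to $\overline{v}$, so $U_T(f_T',r_n)=\overline{v}$. On the other hand, $r_n$ is the adversary's equilibrium strategy in the restricted zero-sum game, against which the team's best achievable value over $\Delta(\mathcal{F}'_T)$ equals the restricted value $\underline{v}$; since $f_T'\in\mathcal{F}'_T$, this yields $\overline{v}=U_T(f_T',r_n)\leq\underline{v}$. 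Combined with the always-valid inequality $\overline{v}\geq\underline{v}$ we obtain $\overline{v}=\underline{v}$, proving the claim.

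Finally I would assemble the pieces: any iteration that does not terminate satisfies $\overline{v}>\underline{v}$ and hence, by the contrapositive of the claim, introduces a type not present before; since there are at most $2^{|\Pi_1|}\frac{|\Pi_T|}{|\Pi_1|}$ types, CMB performs at most that many steps. The delicate point to get right is precisely the linearity-plus-best-response argument of the second paragraph, which forces every pure component on $S$—and therefore the already-present column $f_T'$—to achieve the full best-response value; the zero-sum equilibrium characterization of $r_n$ then closes the gap between $\overline{v}$ and $\underline{v}$.
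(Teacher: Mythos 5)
Your proof is correct and follows essentially the same route as the paper's: both partition the infinite hybrid-form strategy space into at most $2^{|\Pi_1|}|\Pi_{T\setminus\{1\}}|$ classes indexed by the support of $x_1\sim r_1$ together with $\pi_{T\setminus\{1\}}$, and both use the strict-convex-combination argument to show every member of a class attains the same (best-response) value against $r_n$, so no class is added twice. Your write-up is in fact slightly more careful than the paper's at the final step, where you explicitly derive $\overline{v}=\underline{v}$ from a repeated class via the equilibrium property of $r_n$ in the restricted game rather than just asserting that a duplicate cannot be "better".
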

\begin{proof}
Given the adversary strategy $r_n$, suppose that $f_T=(r_1,\pi_{T\setminus\{1\}})$ is a best response for the team. Let   $x_1$ be a realization-equivalent normal-form strategy of $r_1$ with its support set $S_{x_1} =\{\pi_1\mid x_1(\pi_1)>0, \pi_1\in \Pi_1\}$. Then, $(x_1,\pi_{T\setminus\{1\}})$ and $f_T$ are realization-equivalent with: \begin{align*}&U_T(f_T,r_n)\\=&U_T((x_1,\pi_{T\setminus\{1\}}),r_n)\\=&\sum_{\pi_1\in S_{x_1}}x_1(\pi_1)U_T(\pi_1,\pi_{T\setminus\{1\}},r_n).\end{align*} Note that, if $(x_1,\pi_{T\setminus\{1\}})$   is a best response against $r_n$, then $(\pi_1,\pi_{T\setminus\{1\}})(\forall \pi_1\in S_{x_1})$ is also a best response against $r_n$. 
Then, for any $\pi_1\in S_{x_1}$  we   have $U_T(\pi_1,\pi_{T\setminus\{1\}},r_n)=U_T(f_T,r_n)$.\footnote{If there is $  \pi_1\in S_{x_1}$ such that $U_T(\pi_1,\pi_{T\setminus\{1\}},r_n)<U_T((x_1,\pi_{T\setminus\{1\}}),r_n)$, then there is $\pi'_1\in S_{x_1}$ such that $U_T(\pi'_1,\pi_{T\setminus\{1\}},r_n)>U_T((x_1,\pi_{T\setminus\{1\}}),r_n)$, and then  higher $U_T((x_1,\pi_{T\setminus\{1\}}),r_n)$ is achieved by moving the probability for $\pi_1$ to $\pi'_1$, which causes a contradiction.} 
Therefore, for any $
r'_1\in \mathcal{R}_1$ with a realization-equivalent normal-form strategy  $x'_1$ such that $S_{x'_1}=S_{x_1}$, \begin{align*}&U_T(r'_1,\pi_{T\setminus\{1\}},r_n)\\=&\sum_{\pi_1\in S_{x_1}}\!\!\!x'_1(\pi_1)U_T(\pi_1,\pi_{T\setminus\{1\}},r_n)\\=&U_T(x_1,\pi_{T\setminus\{1\}},r_n)\\ =&U_T(f_T,r_n).\end{align*} 
 For each $\Pi'_1\subseteq \Pi_1$, we can define a set of pure hybrid-form strategies: \begin{align*}&\mathcal{F}_T(\Pi'_1,\pi_{T\setminus\{1\}})\\=&\{(r_1,\pi_{T\setminus\{1\}})\in\mathcal{F}_T\mid S_{x_1} = \Pi'_1(\subseteq \Pi_1), x_1 \sim r_1  \}.\end{align*} where $\forall f_T\in \mathcal{F}_T(\Pi'_1,\pi_{T\setminus\{1\}})$, $f_T$ is a best response against $r_n$ if $\exists$ $f'_T \in \mathcal{F}_T(\Pi'_1,\pi_{T\setminus\{1\}})$ is a best response against $r_n$. In CMB, only the best response against $r_n$, which must be better than any strategy in  $\mathcal{F}'_T$ against $r_n$,  will be added to $\mathcal{F}'_T$. Therefore, only one strategy $f'_T$ in $\mathcal{F}_T(\Pi'_1,\pi_{T\setminus\{1\}})$ will be added to $\mathcal{F}'_T$, i.e., the set of pure hybrid-form strategies $\mathcal{F}_T(\Pi'_1,\pi_{T\setminus\{1\}})$ is represented by a single strategy $f'_T$ in $\mathcal{F}'_T$. In the worst case, CMB needs to add strategies involving all these sets (the number of these sets is up to  $2^{|\Pi_{1}|}\prod_{i\in T\setminus\{1\}}|\Pi_i|$), and then CMB will terminate within $2^{|\Pi_{1}|}\frac{|\Pi_T|}{|\Pi_1|}$ 
 iterations.
\end{proof}

\begin{Proposition}
There is a  TMECor $(\overline{x}_T,r_n)$  with    $|S_{\overline{x}_T}|\leq |\Sigma_n|$. 
\end{Proposition}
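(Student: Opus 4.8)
The plan is to avoid reasoning directly in the infinite hybrid-form space and instead import a support bound that is easy to obtain for \emph{normal-form} team strategies, then push it through the realization-equivalence bridge of Lemma~\ref{realizationEQLemma}. Concretely, it suffices to (i) produce a TMECor $(x_T,r_n)$ in which the team uses a normal-form strategy $x_T\in\Delta(\Pi_T)$ with $|S_{x_T}|\le|\Sigma_n|$, and then (ii) swap $x_T$ for a realization-equivalent hybrid-form strategy $\overline{x}_T$ of identical support size, checking that $(\overline{x}_T,r_n)$ stays an equilibrium. The footnote's remark that this is \emph{not} a direct corollary of the normal-form bound is precisely the observation that step (ii) is the essential bridge between the two representations.

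For step (i) I would work with the finite normal-form version of LP~\eqref{TMECorprogramNew} (range the team variable over $\Pi_T$ instead of $\mathcal{F}_T$, replacing $U_T(\overline{x}_T,\sigma_n)$ by $U_T(x_T,\sigma_n)$). Fix the optimal value $v^\ast$ and an optimal adversary plan $r_n$, and consider the polytope of maxmin-optimal team strategies $P=\{x_T\in\Delta(\Pi_T): U_T(x_T,r_n')\ge v^\ast\ \text{for all } r_n'\in\mathcal{R}_n\}$. A Carathéodory/basic-solution argument then delivers the bound: writing $u^{\sigma_n}=(U_T(\pi_T,\sigma_n))_{\pi_T\in\Pi_T}$, each guarantee constraint is a nonnegative combination $\sum_{\sigma_n}r_n'(\sigma_n)\,u^{\sigma_n}$, so all guarantee constraints active at a vertex of $P$ lie in $\mathrm{span}\{u^{\sigma_n}:\sigma_n\in\Sigma_n\}$, a space of dimension at most $|\Sigma_n|$. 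Hence a vertex $x_T$ of $P$ has at most $|\Sigma_n|$ positive coordinates; by the exchangeability of equilibria in the zero-sum game, $(x_T,r_n)$ is itself a TMECor, recovering the normal-form bound of Celli and Gatti~\citeyearpar{celli2018computational}.

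For step (ii) I would apply Lemma~\ref{realizationEQLemma} to this $x_T$, obtaining $\overline{x}_T\in\Delta(\mathcal{F}_T)$ with $\overline{x}_T\sim x_T$ and $|S_{\overline{x}_T}|=|S_{x_T}|\le|\Sigma_n|$. Because $x_T\sim\overline{x}_T$, Theorem~\ref{coroTMECorEQU} gives $U_T(\overline{x}_T,\sigma_n)=U_T(x_T,\sigma_n)$ for every $\sigma_n\in\Sigma_n$, and hence $U_T(\overline{x}_T,r_n')=U_T(x_T,r_n')$ for every adversary $r_n'$ by the identity $U_T(\cdot,r_n')=\sum_{\sigma_n}r_n'(\sigma_n)U_T(\cdot,\sigma_n)$. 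Replacing $x_T$ by $\overline{x}_T$ therefore changes none of the relevant payoffs, so $\overline{x}_T$ still guarantees $v^\ast$ against all adversary responses and $r_n$ is still a best response to $\overline{x}_T$; thus $(\overline{x}_T,r_n)$ is again a TMECor, now with $|S_{\overline{x}_T}|\le|\Sigma_n|$.

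The delicate point lies in step (i): landing on exactly $|\Sigma_n|$ rather than $|\Sigma_n|+1$ requires ensuring the normalization $\sum_{\pi_T}x_T(\pi_T)=1$ contributes no independent active constraint beyond $\mathrm{span}\{u^{\sigma_n}\}$, which is where the empty sequence $\varnothing\in\Sigma_n$ and the flow structure of $\mathcal{R}_n$ must be used to absorb the normalization (this accounting is the one already carried out for normal-form strategies in~\cite{celli2018computational}). By contrast, step (ii) is routine given Lemma~\ref{realizationEQLemma} and realization equivalence; the only subtlety is that the adversary's strategy must be left untouched, which is safe because realization equivalence preserves the team's utility against \emph{every} adversary strategy, not merely against the equilibrium $r_n$.
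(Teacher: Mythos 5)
Your proposal is correct and follows essentially the same route as the paper: take the normal-form TMECor with $|S_{x_T}|\leq|\Sigma_n|$, convert it via Lemma~\ref{realizationEQLemma} to a realization-equivalent hybrid-form strategy of equal support size, and invoke Theorem~\ref{coroTMECorEQU} to conclude that $(\overline{x}_T,r_n)$ remains an equilibrium. The only difference is that your step (i) re-derives the normal-form support bound with a Carath\'eodory/vertex argument, whereas the paper simply cites Celli and Gatti \citeyearpar{celli2018computational} for that fact.
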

\begin{proof}
There is  TMECor ($x_T,r_n$)   with $|S_{x_T}|\leq|\Sigma_n|  $ \cite{celli2018computational}.  We know that there is $\overline{x}\in \Delta(\mathcal{F}_T)$ with $|S_{x_T}| = |S_{\overline{x}_T}|$    such that $x_T\sim\overline{x}_T$  by Lemma \ref{realizationEQLemma}. By Theorem \ref{coroTMECorEQU}, $U_T(x_T,\sigma_n)=U_T(\overline{x}_T,\sigma_n)(\forall \sigma_n\in\Sigma_n)$. Then,  $(\overline{x}_T,r_n)$ is a TMECor, concluding the proof.
%
\end{proof}

\begin{Theorem}
  CMB converges to a TMECor in   $(\mathcal{F}_T,\Sigma_n)$.
\end{Theorem}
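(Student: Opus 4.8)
The plan is to run the standard single-oracle (column-generation) correctness argument: sandwich the value of the full game between the lower bound produced by CoreLP and the upper bound produced by the BRO, show the two bounds coincide when the loop exits, and then read the two Nash best-response conditions directly off the equalities. Throughout I would write $v^\star=\max_{\overline{x}_T\in\Delta(\mathcal{F}_T)}\min_{r_n}U_T(\overline{x}_T,r_n)$ for the value of the full game $(\mathcal{F}_T,\Sigma_n)$, which by the zero-sum minimax theorem also equals $\min_{r_n}\max_{\overline{x}_T\in\Delta(\mathcal{F}_T)}U_T(\overline{x}_T,r_n)$. Finite termination is already guaranteed by Proposition \ref{upperboundofiteration}, so the only remaining task is to certify that the profile returned at termination is an equilibrium.

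First I would establish the two bounds that hold at every iteration. Since CoreLP solves LP \eqref{TMECorprogramNew} over the restricted team set $\mathcal{F}'_T\subseteq\mathcal{F}_T$ while letting the adversary range over the full $\Sigma_n$, its output satisfies $\underline{v}=\max_{\overline{x}_T\in\Delta(\mathcal{F}'_T)}\min_{r_n}U_T(\overline{x}_T,r_n)$, and the returned $r_n$ is the adversary's minimizer against the returned $\overline{x}_T$, so $U_T(\overline{x}_T,r_n)=\underline{v}$. Shrinking the team's feasible set can only lower the maximin value, giving $\underline{v}\le v^\star$. On the other side, BRO$(r_n)$ returns $\overline{v}=\max_{f_T\in\mathcal{F}_T}U_T(f_T,r_n)$; because $U_T$ is linear in the team's mixture this equals $\max_{\overline{x}_T\in\Delta(\mathcal{F}_T)}U_T(\overline{x}_T,r_n)$, which is at least $\min_{r_n'}\max_{\overline{x}_T}U_T(\overline{x}_T,r_n')=v^\star$. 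Hence $\underline{v}\le v^\star\le\overline{v}$ throughout.

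Next I would use the termination test. When the loop exits we have $\underline{v}=\overline{v}$, so the sandwich collapses to $\underline{v}=v^\star=\overline{v}$. To conclude that $(\overline{x}_T,r_n)$ is a TMECor in $(\mathcal{F}_T,\Sigma_n)$ I would verify the two best-response conditions separately. For the team: no deviation helps, because $\max_{\overline{x}'_T\in\Delta(\mathcal{F}_T)}U_T(\overline{x}'_T,r_n)=\overline{v}=\underline{v}=U_T(\overline{x}_T,r_n)$. For the adversary: since CoreLP already minimized over the entire $\Sigma_n$, the strategy $r_n$ is a best response against $\overline{x}_T$, i.e. $\min_{r_n'}U_T(\overline{x}_T,r_n')=\underline{v}=U_T(\overline{x}_T,r_n)$. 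Both players are simultaneously best-responding, which by the zero-sum definition in Section \ref{TMECorPre} is precisely a TMECor.

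The main obstacle, and the step I would treat most carefully, is the upper bound $v^\star\le\overline{v}$: it invokes the minimax theorem over the \emph{infinite} hybrid-form set $\Delta(\mathcal{F}_T)$ and also needs the maximum returned by the oracle to be attained. I would discharge both by reducing to the finite normal-form game via Theorem \ref{coroTMECorEQU}: that equivalence shows $U_T(\cdot,\sigma_n)$ takes the same values under hybrid-form and normal-form team strategies for every $\sigma_n\in\Sigma_n$, so the two games share the value $v^\star$ and the minimax theorem transfers from the finite simplex $\Delta(\Pi_T)$; attainment of $\overline{v}$ holds because BRO$(r_n)$ is by definition $\arg\max_{f_T\in\mathcal{F}_T}U_T(f_T,r_n)$ and the oracle returns such a maximizer. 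A minor secondary check is that the equality $U_T(\overline{x}_T,r_n)=\underline{v}$ and the adversary-minimization reading of LP \eqref{TMECorprogramNew} are valid, which follows directly from the LP's construction recalled just after its statement.
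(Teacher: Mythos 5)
Your proof is correct and its load-bearing core is exactly the paper's argument: finite termination from Proposition \ref{upperboundofiteration} gives $\underline{v}=\overline{v}$, the team's best-response condition follows because the oracle value $\overline{v}=\max_{\overline{x}'_T}U_T(\overline{x}'_T,r_n)$ equals the achieved value $U_T(\overline{x}_T,r_n)=\underline{v}$, and the adversary's best-response condition holds because CoreLP already minimizes over the full $\Sigma_n$. The sandwich $\underline{v}\le v^\star\le\overline{v}$ and the appeal to the minimax theorem are harmless but unnecessary scaffolding --- the two best-response conditions you verify at the end never use $v^\star$ --- though your side remark that attainment of the oracle's maximum over the infinite set $\mathcal{F}_T$ deserves justification is a fair point the paper leaves implicit.
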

\begin{proof}
By Proposition \ref{upperboundofiteration}, CMB will terminate with $\overline{v}=\underline{v}$ within a finite number of iterations. Then, we have, $U_T(\overline{x}_T,r_n)= \underline{v}=\overline{v}\geq U_T(\overline{x}'_T,r_n)(\forall \overline{x}'_T)$, and $-U_T(\overline{x}_T,r_n)= -\underline{v} \geq -U_T(\overline{x}_T,r'_n)(\forall r'_n)$. Therefore, $(\overline{x}_T,r_n)$ is a TMECor.
\end{proof}

\begin{Proposition}
If CMB terminates with   $\overline{v}-\underline{v}\leq \epsilon$, then its output $(\overline{x}_T,r_n)$ is an $\epsilon$-TMECor in $(\mathcal{F}_T,\Sigma_n)$.
\end{Proposition}
\begin{proof}
$U_T(\overline{x}'_T,r_n)-U_T(\overline{x}_T,r_n)\leq \overline{v} -\underline{v}\leq \epsilon$ $(\forall \overline{x}'_T)$, and $-U_T(\overline{x}_T,r'_n)-(-U_T(\overline{x}_T,r_n))\leq  -\underline{v} -(-\underline{v})= 0 (\forall r'_n)$. Therefore, $(\overline{x}_T,r_n)$ is an $\epsilon$-TMECor.
\end{proof}

\begin{Proposition}
$\prod_{ i\in T }r_i(\sigma_T(i))$ with $r_i\in \overline{\mathcal{R}}_i $ for all $i\in T\setminus\{1\}$ and $r_1\in \mathcal{R}_1$ is exactly represented by $w(\sigma_T)$ in Eqs.(\ref{werconstraint1}) and  (\ref{werconstraint2}).
\end{Proposition}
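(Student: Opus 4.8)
The plan is to reduce the claim to a short case analysis driven by the binary structure of the variables. Write $p := r_1(\sigma_T(1)) \in [0,1]$ and $b_i := r_i(\sigma_T(i)) \in \{0,1\}$ for each $i \in T \setminus \{1\}$, of which there are exactly $n-2$. The quantity to be represented is then $p \prod_{i \in T \setminus \{1\}} b_i$, which equals $p$ when every $b_i = 1$ and equals $0$ as soon as some $b_i = 0$. I would argue both directions: that the intended value satisfies Eqs.(\ref{werconstraint1})--(\ref{werconstraint2}) (feasibility), and that no other value of $w(\sigma_T)$ does (uniqueness), which together establish exact representation.

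First I would handle the case where some $b_j = 0$. Here Eq.(\ref{werconstraint1}) specialized to $i = j$ reads $0 \le w(\sigma_T) \le b_j = 0$, forcing $w(\sigma_T) = 0$, which is precisely the product value. I would then check consistency with Eq.(\ref{werconstraint2}): its left inequality gives $w(\sigma_T) \le p$, which holds, and its right inequality requires $p - w(\sigma_T) = p \le (n-2) - \sum_{i \in T\setminus\{1\}} b_i$; since at least one $b_i$ vanishes, $\sum_i b_i \le n-3$, so the right-hand side is at least $1 \ge p$. Thus both constraints are simultaneously satisfied and pin $w(\sigma_T)$ to $0$.

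Next I would treat the case where all $b_i = 1$. Then $\sum_{i \in T\setminus\{1\}} b_i = n-2$, so the right-hand side of Eq.(\ref{werconstraint2}) collapses to $0$, yielding $0 \le p - w(\sigma_T) \le 0$ and hence $w(\sigma_T) = p$. This is exactly the product value, and Eq.(\ref{werconstraint1}) is automatically satisfied since $0 \le p \le 1 = b_i$. Combining the two cases shows that for every admissible assignment of the binary variables and every $p \in [0,1]$, the system (\ref{werconstraint1})--(\ref{werconstraint2}) admits $w(\sigma_T) = p\prod_{i \in T\setminus\{1\}} b_i$ as its unique solution, i.e. $w(\sigma_T)$ represents $\prod_{i \in T} r_i(\sigma_T(i))$ exactly.

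The step I expect to require the most care is verifying the right-hand side of Eq.(\ref{werconstraint2}) in the ``some $b_j = 0$'' case: one must confirm that the bound $(n-2) - \sum_i b_i$ is large enough to accommodate $w(\sigma_T)=0$ together with an arbitrary $p \le 1$, i.e. that it never drops below $p$. This is exactly where the coefficient $n-2$ (the number of binary team members) is used, and it is the only place where the general $n$-player structure, as opposed to the three-player special case, enters the argument: the inequality $\sum_i b_i \le n-3$ whenever some $b_j = 0$ makes the bound simultaneously tight enough to force exactness when all binaries are $1$ and slack enough to preserve feasibility otherwise.
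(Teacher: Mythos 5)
Your proof is correct and follows essentially the same route as the paper's: a case split on whether all binary variables $r_i(\sigma_T(i))$, $i\in T\setminus\{1\}$, equal one, using Eq.(\ref{werconstraint2}) to force $w(\sigma_T)=r_1(\sigma_T(1))$ in that case and Eq.(\ref{werconstraint1}) to force $w(\sigma_T)=0$ otherwise. Your additional check that the forced value remains feasible for the other constraint (in particular that $(n-2)-\sum_i b_i\ge 1\ge p$ when some $b_j=0$) is a small but welcome completion of a step the paper's proof leaves implicit.
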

\begin{proof}
 $\prod_{ i\in T }r_i(\sigma_T(i))= r_1( \sigma_T( 1) )$ if $r_i(\sigma_T(i))=1(\forall i\in T\setminus\{1\})$, otherwise $\prod_{ i\in T }r_i(\sigma_T(i))=0$. First, by Eq.(\ref{werconstraint2}),  $0\leq  r_1( \sigma_T( 1) )- w(\sigma_T)\leq n-2-(n-2)=0$ if $r_i(\sigma_T(i))=1(\forall i\in T\setminus\{1\})$, i.e., $w(\sigma_T)= r_1( \sigma_T( 1) )$. Second, by Eq.(\ref{werconstraint1}), $w(\sigma_T)=0$ if any $i\in T\setminus\{1\}$ with $r_i(\sigma_T(i))=0$. Then, $\prod_{ i\in T }r_i(\sigma_T(i))$   is exactly represented by $w(\sigma_T)$ in Eqs.(\ref{werconstraint1}) and  (\ref{werconstraint2}). 
\end{proof}
\begin{Theorem}
The solution of problem (\ref{TMECorBRO}) solves problem (\ref{TMECorBROART}). 
\end{Theorem}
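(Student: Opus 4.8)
The plan is to prove the statement constructively: take an optimal solution $(r_i)_{i\in T}$ of the multilinear program~(\ref{TMECorBRO}) (so $r_i\in\overline{\mathcal{R}}_i$ for $i\in T\setminus\{1\}$ and $r_1\in\mathcal{R}_1$), define the auxiliary variables by the product itself, $w(\sigma_T):=\prod_{i\in T}r_i(\sigma_T(i))$ for every $\sigma_T\in\Sigma_T$, and show that the pair $\bigl((r_i)_{i\in T},w\bigr)$ is a feasible, and in fact optimal, solution of Problem~(\ref{TMECorBROART}) with the same objective value. The whole content of the theorem is that the newly added associated constraints do not exclude this lifted point; everything else is inherited. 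I would therefore organize the argument around checking feasibility constraint-family by constraint-family, isolating the associated constraints as the substantive case.

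Feasibility splits into three parts. The sequence-form and integrality constraints (\ref{TMECorBROcon1})--(\ref{TMECorBROcon3}) are literally inherited, since $(r_i)_{i\in T}$ is already feasible for~(\ref{TMECorBRO}) and these constraints are shared by both programs. The MR constraints (\ref{werconstraint1})--(\ref{werconstraint2}) hold for $w(\sigma_T)=\prod_{i\in T}r_i(\sigma_T(i))$ directly by Proposition~\ref{MultilinearRepresentation}, which certifies that $w$ exactly represents the product whenever the $r_i$ with $i\in T\setminus\{1\}$ are binary. The only genuinely substantive step is to verify the associated constraints~(\ref{associatedConstraintEQ}).

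For this I would substitute the product form into~(\ref{associatedConstraintEQ}). Fix a member $i\in T$, an information set $I_{i,j}\in I_i$ with $\text{seq}_i(I_{i,j})=\sigma_T(i)$, and a joint sequence $\sigma_T\in\Sigma_T(I_{i,j})$. Because only the $i$-th factor depends on the action $a\in\chi(I_{i,j})$, the factors indexed by $T\setminus\{i\}$ pull out of the sum:
\begin{align*}
\sum_{a\in\chi(I_{i,j})}w(\sigma_T(i)a,\sigma_{T\setminus\{i\}})
&=\Bigl(\prod_{k\in T\setminus\{i\}}r_k(\sigma_T(k))\Bigr)\sum_{a\in\chi(I_{i,j})}r_i(\sigma_T(i)a).
\end{align*}
The inner sum equals $r_i(\sigma_T(i))$ by the network-flow constraint~(\ref{sqconstraintsecondad}) applied to player $i$, which holds for the continuous $r_1\in\mathcal{R}_1$ as well as for the pure $r_i\in\overline{\mathcal{R}}_i$, since pure sequence-form strategies also satisfy the flow equations. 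Reinserting this factor recovers exactly $\prod_{i\in T}r_i(\sigma_T(i))=w(\sigma_T)$, so~(\ref{associatedConstraintEQ}) holds. This is where the intuition quoted before the theorem becomes precise: each associated constraint is just the player-$i$ flow equation multiplied through by the (constant) sequence probabilities of the other members, so a true product of sequence-form strategies can never violate it.

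Finally, the objective values coincide termwise because $w(\times_{i\in T}\text{seq}_i(l))=\prod_{i\in T}r_i(\text{seq}_i(l))$ for every $l\in L$, so the two integrands agree. To upgrade equal attainable values into optimality I would invoke the reverse correspondence: in any feasible point of~(\ref{TMECorBROART}) the variables $r_i$ with $i\in T\setminus\{1\}$ are binary, so Proposition~\ref{MultilinearRepresentation} forces $w(\sigma_T)=\prod_{i\in T}r_i(\sigma_T(i))$, whence its $r$-restriction is feasible for~(\ref{TMECorBRO}) with the identical objective; thus the optimum of~(\ref{TMECorBROART}) cannot exceed that of~(\ref{TMECorBRO}). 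Combined with the lifted solution above, the two optima are equal and the lifted optimal solution of~(\ref{TMECorBRO}) solves~(\ref{TMECorBROART}). The main obstacle is the associated-constraint verification, but once the product structure is exposed it collapses to a one-line use of the flow constraint; the only care needed is to confirm the flow identity for the continuous first player as well as for the binary remaining members.
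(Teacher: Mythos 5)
Your proof is correct and takes essentially the same route as the paper: both arguments reduce the associated constraints to the player-$i$ network-flow equation by factoring out the other members' sequence probabilities (the paper phrases this as a contradiction, you verify it directly), and both invoke Proposition~\ref{MultilinearRepresentation} for the MR constraints. Your closing optimality argument is not needed for the theorem itself --- it reproduces what the paper states separately as Corollary~\ref{theoremApproachART}.
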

 \begin{proof}
Suppose that a sequence-form strategy $r_i$   of player $i$  is   feasible in Problem   (\ref{TMECorBRO})    but  is infeasible in  Problem (\ref{TMECorBROART}). 
By  Proposition \ref{MultilinearRepresentation}, $w(\times_{i\in T}\text{seq}_i(l))$ is exactly represented by Eqs.(\ref{werconstraint1}) and  (\ref{werconstraint2}).
Therefore, $r_i$  does not satisfy the constraints   in Eq.(\ref{associatedConstraintEQ}).
 Without loss of generality, we assume that  $\sum_{a\in \chi(I_{i,j})} r_i(\sigma_{T}(i) a)=r_i(\sigma_{T}(i))$,  but $\sum_{a\in \chi(I_{i,j})} w(\sigma_{T}(i) a, \sigma_{T\setminus\{i\}})\neq w(\sigma_{T})$.  We have:
\begin{align*}  &\textstyle \sum_{a\in \chi(I_{i,j})} w(\sigma_{T}(i) a, \sigma_{T\setminus\{i\}})\neq w(\sigma_{T})  \\
\Rightarrow &    \!\!\!\!\!\!\!   \sum_{a\in \chi(I_{i,j})}    \!\!\!\!\!\!\!   r_i(\sigma_{T}(i) a) \!\!\!\!\!   \prod_{ j\in T\setminus\{i\} } \!\!\!\!    r_j(\sigma_T(j)) \neq   r_i(\sigma_{T}(i)  ) \!\!\!\!\!   \prod_{ j\in T\setminus\{i\} }  \!\!\!\!   r_j(\sigma_T(j))\\
\Rightarrow &\textstyle\sum_{a\in \chi(I_{i,j})} r_i(\sigma_{T}(i) a) \neq  r_i(\sigma_{T}(i)  ),
\end{align*}
which causes a contradiction. Therefore, the  feasible solution of  Problem (\ref{TMECorBRO}) is feasible in Problem (\ref{TMECorBROART}). 
\end{proof}
\begin{Corollary}
For any strategy $r_n$ of the adversary, the optimal solution of Problem (\ref{TMECorBROART}) is a best response against $r_n$. 
\end{Corollary}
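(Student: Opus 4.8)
The plan is to sandwich the two optimal values and then transfer optimality from Problem (\ref{TMECorBROART}) back to Problem (\ref{TMECorBRO}). First I would recall that Problem (\ref{TMECorBRO}) is, by construction, the exact encoding of $\arg\max_{f_T\in\mathcal{F}_T}U_T(f_T,r_n)$: its feasible points are precisely the hybrid-form strategies $f_T=(r_1,\pi_{T\setminus\{1\}})$ (with $r_1\in\mathcal{R}_1$ and the binary $r_i$, $i\in T\setminus\{1\}$, realization-equivalent to some $\pi_i\in\Pi_i$), and its objective is exactly $U_T(f_T,r_n)$. Hence any optimizer of (\ref{TMECorBRO}) is by definition a best response against $r_n$, and it suffices to show that an optimal solution of (\ref{TMECorBROART}) is also optimal for (\ref{TMECorBRO}).

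Next I would show the two programs share the same optimal value. For the inequality $\mathrm{OPT}(\ref{TMECorBROART})\ge\mathrm{OPT}(\ref{TMECorBRO})$, I take an optimizer $(r_i)_{i\in T}$ of (\ref{TMECorBRO}) and extend it by setting $w(\sigma_T)=\prod_{i\in T}r_i(\sigma_T(i))$ for every joint sequence. By Theorem \ref{theoremFeasibleART} this extended point is feasible in (\ref{TMECorBROART}), and by Proposition \ref{MultilinearRepresentation} the two objectives coincide on it, so (\ref{TMECorBROART}) attains a value at least $\mathrm{OPT}(\ref{TMECorBRO})$. For the reverse inequality I start from an optimal solution $(r_i^*,w^*)$ of the MILP (\ref{TMECorBROART}). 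Because (\ref{TMECorBROART}) retains the network-flow and integrality constraints Eqs. (\ref{TMECorBROcon1})--(\ref{TMECorBROcon3}), the projection $(r_i^*)_{i\in T}$ is feasible in (\ref{TMECorBRO}); dropping the MR and associated constraints only enlarges the region. The key point is that at this integer-feasible optimum the variables $r_i^*$ for $i\in T\setminus\{1\}$ are binary, so Proposition \ref{MultilinearRepresentation} forces $w^*(\sigma_T)=\prod_{i\in T}r_i^*(\sigma_T(i))$ exactly through Eqs. (\ref{werconstraint1})--(\ref{werconstraint2}). Consequently the objective of (\ref{TMECorBROART}) at $(r_i^*,w^*)$ equals the objective of (\ref{TMECorBRO}) at $(r_i^*)$, which is therefore at most $\mathrm{OPT}(\ref{TMECorBRO})$.

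Combining the two inequalities gives equality of the optimal values, and—more usefully—shows that the projection $(r_i^*)_{i\in T}$ attains $\mathrm{OPT}(\ref{TMECorBRO})$. Thus $(r_i^*)$ is optimal for Problem (\ref{TMECorBRO}), i.e., it is the hybrid-form strategy $f_T^*$ maximizing $U_T(f_T,r_n)$, which is exactly a best response against $r_n$.

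The step I expect to require the most care is the reverse inequality: the identity $w=\prod_i r_i$ guaranteed by Proposition \ref{MultilinearRepresentation} holds only where the $r_i$ ($i\neq 1$) are integral, so the argument must explicitly invoke integrality of the returned MILP optimum rather than of an arbitrary LP-relaxed point. One must also confirm that the added associated constraints (\ref{associatedConstraintEQ}) never discard the integer point achieving the best-response value; this is precisely what the feasibility transfer of Theorem \ref{theoremFeasibleART} supplies. The corollary then follows by chaining that feasibility transfer with the exact representation of Proposition \ref{MultilinearRepresentation} to obtain value equality, and reading off that the optimizer of (\ref{TMECorBROART}) projects to a best response.
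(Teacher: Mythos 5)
Your proof is correct and follows essentially the same route as the paper's: it chains the feasibility transfer of Theorem \ref{theoremFeasibleART} with the exact representation of Proposition \ref{MultilinearRepresentation} to equate the two optimal values and transfer optimality back to Problem (\ref{TMECorBRO}). The only difference is that you spell out both inequalities explicitly (in particular, that integrality of the MILP optimum pins down $w^*$ as the true products), a step the paper leaves implicit in its remark that the two problems share the same sequence-form strategy space and objective.
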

\begin{proof}
By Proposition \ref{MultilinearRepresentation} and Theorem \ref{theoremFeasibleART}, the  feasible solution of  Problem (\ref{TMECorBRO}) is feasible in Problem (\ref{TMECorBROART}) (and both problems have the same sequence-form strategy space), and the objective functions in both problems are the same. Therefore, the   optimal solution of Problem (\ref{TMECorBROART}) is   optimal in Problem (\ref{TMECorBRO}), concluding the proof.
 \end{proof}
 \section{Associated Constraints of Kuhn Poker}
   \label{Example4pKuhnInf}
In a four-player Kuhn poker game, the information set for player $i$ is defined by the card that player $i$ holds and the observed actions taken by players in turn.
For example, \text{\it J:/cccr:} is an information set of player 1 reachable by sequence $\text{\it J:/:c}$, in which player 1 has a card  \text{\it J}  and observed the  actions   \text{\it `c,c,c,r'} of all players in turn.  There are two possible actions player 1 can take: \text{\it call} (\text{\it c}) 
and \text{\it fold} (\text{\it f}). Therefore, by Eq.(\ref{sqconstraintsecondad}), we have
  $r_1(\text{\it J:/:c})=r_1(\text{\it J:/cccr:c})+r_1(\text{\it J:/cccr:f})$. 
This information set may contain many possible nodes because of the uncertainty in what cards of other players hold. Each node in the information set corresponds to a sequence for each player from the root to this node. For example $(\text{\it J:/:c, Q:/c:c, T:/cc:c, K:/ccc:r})$ and $(\text{\it J:/:c, Q:/c:c, K:/cc:c, T:/ccc:r})$ are two different nodes in this information set with different cards of players 3 and 4. Apparently, there are many different joint sequences for the team to reach an information set,   corresponding to  different nodes in this information set, e.g.,  $(\text{\it J:/:c, Q:/c:c, T:/cc:c})$ or $(\text{\it J:/:c, Q:/c:c, K:/cc:c})$.  
Because player 1 acts in this information set and chooses from two possible actions, for given a joint sequence $(\text{\it J:/:c, Q:/c:c, T:/cc:c})$ that leads to this information set, there exist the extended joint sequences $(\text{\it J:/cccr:c, Q:/c:c, T:/cc:c})$ and $(\text{\it J:/cccr:f, Q:/c:c, T:/cc:c})$ that lead to some immediately succeeding information sets. Therefore, we can generate the following associated constraint (note that $w(\sigma_T)= \prod_{ i\in T }r_i(\sigma_T(i))$):
 \begin{align*}&w(\text{\it J:/:c, Q:/c:c, T:/cc:c})\\=&w(\text{\it J:/cccr:c, Q:/c:c, T:/cc:c})\\&+w(\text{\it J:/cccr:f, Q:/c:c, T:/cc:c}),\end{align*}
 because of $r_1(\text{\it J:/c:c})=r_1(\text{\it J:/cccr:c})+r_1(\text{\it J:/cccr:f})$. Moreover, given $w(\text{\it J:/:c, Q:/c:c, T:/cc:r})$, we have the following associated constraint:
  \begin{align*}&w(\text{\it J:/:c, Q:/c:c}, \varnothing)\\=&w(\text{\it J:/:c, Q:/c:c, T:/cc:r})\\&+w(\text{\it J:/:c, Q:/c:c, T:/cc:c}) .\end{align*}
   In addition, given $w(\text{\it J:/:c, Q:/c:r}, \varnothing)$, the following associated constraint holds:   \begin{align*}&w(\text{\it J:/:c, Q:/c:c}, \varnothing)+w(\text{\it J:/:c, Q:/c:r}, \varnothing)\\=&w(\text{\it J:/:c}, \varnothing, \varnothing)\\=&r_1(\text{\it J:/:c}).\end{align*} 
  Note that $w(\varnothing,\varnothing,\varnothing)=r_1(\varnothing)=1.$
  
Assume that the team's joint sequence   $(\text{\it J:/cccr:c, Q:/c:c, T:/cc:c})$ reaches a node $(\text{\it J:/cccr:c, Q:/c:c, T:/cc:c, K:/ccc:r})$ in player 2's information set $\text{\it Q:/cccrc:}$. Because player 2 can also choose from     two actions $c$ and $f$, we can generate an associated constraint:
\begin{align*}&w(\text{\it J:/cccr:c, {Q:/c:c}, T:/cc:c})\\ =&w(\text{\it J:/cccr:c,{Q:/cccrc:c},T:/cc:c}) \\&+w(\text{\it J:/cccr:c, {Q:/cccrc:f},   T:/cc:c}).\end{align*}
Subsequently, when the team's joint sequence $(\text{\it J:/cccr:c,{Q:/cccrc:c},T:/cc:c})$ reaches a node $(\text{\it J:/cccr:c, Q:/cccrc:c,}$ $\text{T:/cc:c, K:/ccc:r})$ in player 3's information set $\text{\it  T:/cccrcc:}$ (also with   two actions $c$ and $f$), we have an associated constraint:
 \begin{align*}
     &w(\text{\it J:/cccr:c, Q:/cccrc:c, {T:/cc:c}}) \\
     =&w(\text{\it J:/cccr:c, Q:/cccrc:c, {T:/cccrcc:c}})\\&+w(\text{\it J:/cccr:c, Q:/cccrc:c,  {T:/cccrcc:f}}).
 \end{align*}  
 Here, nodes $(\text{\it J:/cccr:c, Q:/cccrc:c, {T:/cccrcc:c}, K:/ccc:r})$ and $w(\text{\it J:/cccr:c, Q:/cccrc:c, {T:/cccrcc:f},}$ $\text{K:/ccc:r})$ are terminal nodes, and the team's joint sequence $(\text{\it J:/cccr:c, Q:/cccrc:c, {T:/cccrcc:c}})$ is considered in Problem (\ref{TMECorBRO}) as a joint sequence reaching a terminal node. Once we enumerate all team members' information sets according to Eq.(\ref{associatedConstraintEQ}), we add all associated constraints corresponding to these joint sequences reaching terminal nodes to Problem (\ref{TMECorBROART}).

\section{Definitions of Evaluation Domains}\label{pokergamerules}
The rules of these games are according to the standard setting \cite{abou2010using,farina2018ex}.
\subsection{Kuhn Poker Games}
$n$K$r$  includes $n$ players and   $r$ possible cards. Each player pays one chip to the pot before the cards are dealt, and then is dealt one private card. The game players take actions in turns. There is   one betting round with one bet, i.e., adding one chip to the pot. The first player may bet or check. After that, a player can check (call) or bet (raise) if there is no bet; otherwise, a player can call or fall if there is a bet. At the showdown, the player who has not folded and has the highest card wins all chips.

\subsection{Leduc Poker Games}
$n$L$r$  includes $n$ players and   $r$ ranks (i.e., $3r$   cards). The initial setting is the same as the one in Kuhn poker games. However, there are two betting rounds, the first one is called preflop (with the preflop bet: two chips), and the second one is called flop (with the flop bet: four chips). Each betting round is the same as the one in Kuhn poker games. After the preflop betting, one community card is dealt (all players can see its information), then the flop betting begins. After the flop betting, players who did not fold showdown (reveal the private cards). The player who pairs the community card with her private card wins the pot; otherwise, the player who has the highest card wins. If some players have the same card and win, they  split the pot.

Our poker games are implemented according to \url{https://github.com/achao2013/cppcfr}. In Leduc games of this implementation, players do not distinguish the different suits of cards for sequences, but players   distinguish the suits of cards for nodes. If players distinguish the suits of cards for nodes, the number of terminal nodes may be fewer, e.g., this number is 6477  in the 3L3 game.

\end{document}